\newtheorem{theorem}{Theorem}[section]
\newtheorem{lemma}[theorem]{Lemma}
\newtheorem{proposition}[theorem]{Proposition}
\theoremstyle{definition}
\newtheorem{definition}[theorem]{Definition}
\theoremstyle{remark}
\newtheorem{remark}[theorem]{Remark}
\numberwithin{equation}{section}
\newcommand{\mysep}{\hspace*{5pt}}
\newcommand{\mylist}[3]
{%
\begin{trivlist}%
\item
\begin{list}{}{\setlength{\leftmargin}{2em}\setlength{\labelwidth}{\leftmargin}}%
\item[(1)] #1
\item[(2)] #2
\item[(3)] #3
\end{list}%
\end{trivlist}%
}
\begin{document}
\title{Perfect Codes in the Discrete Simplex}
\author{Mladen Kova\v{c}evi\'{c}}
\address{Department of Electrical Engineering, University of Novi Sad,
         Trg Dositeja Obra\-dovi\'{c}a 6, 21000 Novi Sad, Serbia}
\email{kmladen@uns.ac.rs   ({\bf corresponding author}),  dejanv@uns.ac.rs}
\author{Dejan Vukobratovi\'{c}}
%
%
\subjclass[2010]{94B25, 05B40, 52C17, 05C12, 68R99.}
\date{October 10, 2013.}
\keywords{Multiset codes, permutation channel, discrete simplex,
perfect codes, sphere packing, integer codes, Manhattan metric.}
\begin{abstract}
We study the problem of existence of (nontrivial) perfect codes in
the discrete $ n $-simplex
$ \Delta_{\ell}^n := \left\{ \begin{pmatrix} x_0, \ldots, x_n \end{pmatrix} :
x_i \in \mathbb{Z}_{+}, \sum_i x_i = \ell \right\} $
under $ \ell_1 $ metric. The problem is motivated by the so-called
multiset codes, which have recently been introduced by the authors
as appropriate constructs for error correction in the permutation
channels. It is shown that $ e $-perfect codes in the $ 1 $-simplex
$ \Delta_{\ell}^1 $ exist for any $ \ell \geq 2e + 1 $, the
$ 2 $-simplex $ \Delta_{\ell}^2 $ admits an $ e $-perfect code if
and only if $ \ell = 3e + 1 $, while there are no perfect codes in
higher-dimensional simplices. In other words, perfect multiset codes
exist only over binary and ternary alphabets.
\end{abstract}
\maketitle
\section{Introduction}
\par The study of perfect codes is a classical, and perhaps one of the
most attractive topics in coding theory. The best studied case are
certainly codes in the Hamming metric spaces
\cite{sloane, cohen, vanLint, tietavainen, zinoviev, best, etzion_vardy},
as they are historically the first codes that were introduced and are
most relevant in practice. There are various other interesting examples
in the literature, however, such as perfect codes under the Lee metric
\cite{astola, horak3, etzion_lee, golomb_welch, horak1, horak2, spacapan},
Levenshtein metric \cite{levenshtein_perf, bours}, codes in projective
spaces \cite{etzion_proj}, Grassmanians \cite{grass1, grass2}, etc.
Delsarte's conjecture \cite{delsarte} on the non-existence of perfect
constant-weight codes under the Johnson metric has also inspired a lot
of research, and still remains unsolved
\cite{roos, etzion_siam, shimabukuro, etzion_schwartz, gordon, etzion_des}.
Many of these problems can be regarded as particular instances of the general theory of
perfect codes in distance-transitive graphs \cite{biggs} (but not all
cases of interest fit into this framework).
In the present paper we investigate perfect codes in discrete simplices
of arbitrary dimension. As discussed in Section \ref{motivation}, codes
in such spaces arise naturally in the context of error correction in the
so-called permutation channels.

The paper is organized as follows. The basic concepts used in
the sequel are introduced in the following subsection. Subsection
\ref{mainresults} summarizes the main contributions of the paper.
Section \ref{motivation} explains the motivation for studying codes
in discrete simplices; the notion of multiset codes is introduced
here and some of their properties are established. Proofs of the
results are given in Section \ref{proofs}.
\subsection{Notation and terminology}
\label{notation}
Let $ \mathbb{Z}_{+} = \{ 0, 1, 2, \ldots \} $ denote the set of nonnegative
integers. Let $ ( S , d ) $ be a finite metric space with an integer-valued
metric $ d $, and $ { \mathcal C } \subseteq S $ an error-correcting code.
Elements of $ \mathcal C $ are called codewords in this context.
\begin{definition}
 $ { \mathcal C } $ is said to be $ e $-perfect, $ e \in \mathbb{Z}_{+} $,
if balls of radius $ e $ centered at codewords are disjoint and cover
the entire space:
\begin{equation}
  { \mathcal B }(x,e) \cap { \mathcal B }(y,e) = \emptyset
  \quad  \text{for every}  \quad  x, y \in { \mathcal C }, \,\, x \neq y ,
\end{equation}
and
\begin{equation}
  \bigcup_{ x \in { \mathcal C } } { \mathcal B }(x,e) = S ,
\end{equation}
where $ { \mathcal B }(x,e) = \big \{ w \in S : d(x,w) \leq e \big\} $
is the decoding region of the codeword $ x $. In other words, every element
of $ S $ is at distance $ \leq e $ from exactly one codeword.
\end{definition}

Clearly, every singleton $ { \mathcal C } = \{ x \} $ is $ D $-perfect, with
$ D $ the diameter of the space $ S $, and $ S $ itself is $ 0 $-perfect.
In the rest of the paper, we shall be interested only in \emph{nontrivial}
perfect codes -- those with $ | { \mathcal C } | \geq 2 $ and $ e \geq 1 $.

Let $ n, \ell \in \mathbb{Z}_{+} $. The space under consideration
in this paper is the discrete version of the standard $ n $-simplex:
\begin{equation}
\label{eq_simplex}
  \Delta_{\ell}^n := 
    \left\{ \big( x_0, \mysep \ldots,\mysep x_n \big) \mysep : \mysep
             x_i \in \mathbb{Z}_{+}, \mysep \sum_{i=0}^n x_i = \ell \right\} ,
\end{equation}
endowed with the following metric:
\begin{equation}
\label{eq_metric}
  d(x,y) = \frac{1}{2} \lVert x-y \rVert_{_1}
         = \frac{1}{2} \sum_{i=0}^{n} | x_i - y_i | ,
\end{equation}
where $ x = ( x_0, \ldots, x_n ), y = ( y_0, \ldots, y_n ) $. (The constant
$ 1/2 $ is taken for convenience because $ \lVert x-y \rVert_{_1} $ is always
even for $ x, y \in \Delta_{\ell}^n $.) The diameter of $ \Delta_\ell^n $ under
$ d $ is clearly $ \ell $. Note that for $ x, y \in \Delta_{\ell}^n $
we can also write:
\begin{equation}
\label{eq_metric2}
  d(x,y) = \sum_{ x_i > y_i } (x_i - y_i) = \sum_{ x_i < y_i } (y_i - x_i) .
\end{equation}
To our knowledge, codes in this space have not been analyzed before.
Perfect codes under $ \ell_1 $ distance seem to have been studied only
in the integer lattice $ \mathbb{Z}^n $ (as periodic extensions of the
codes under the Lee metric), see e.g. \cite{golomb_welch, horak1, etzion_lee}.

It is particularly useful to represent the metric space
$ \left( \Delta_{\ell}^n, d \right) $ as a graph%
\footnote{In the graph
theoretic literature, $ 1 $-perfect codes are also known as efficient
dominating sets (see, e.g., \cite{domination}).}
with $ \left| \Delta_\ell^n \right| = { n+\ell \choose \ell } $ vertices,
and with edges connecting vertices at distance one. This representation
allows one to visualize the space under study, as well as codes in this
space, at least for $ n = 1, 2 $. Unfortunately, the resulting graph is
not distance-transitive and the general methods developed for such graphs
\cite{biggs} cannot be applied.%
\subsection{Main results}
\label{mainresults}
The following theorem summarizes the main contributions of the paper.
Its proof is deferred to Section \ref{proofs}.
\begin{theorem}
\label{thm_main}
Let $ e \geq 1 $. 
\mylist
 { Nontrivial $ e $-perfect code in $ \left( \Delta_{\ell}^1, d \right) $
   exists for every $ \ell \geq 2e + 1 $. Such a code has
   $ \big\lceil \frac{ \ell + 1 }{ 2e + 1 } \big\rceil $ codewords. }
 { Nontrivial $ e $-perfect code in $ \left( \Delta_{\ell}^2, d \right) $
   exists if and only if $ \ell = 3e + 1 $. Furthermore, there are exactly
   two such codes in $ \Delta_{3e + 1}^2 $, each having three
   codewords. }
 { Nontrivial $ e $-perfect code in $ \left( \Delta_{\ell}^n, d \right) $,
   $n\geq3$, does not exist for any $ e $ and $ \ell $. }
\end{theorem}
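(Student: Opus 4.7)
The proof breaks into the three parts of the theorem. For part (1), the observation is that $(x_0,x_1)\mapsto x_0$ is an isometry from $\Delta_\ell^1$ onto the path $\{0,1,\ldots,\ell\}$ with the unit-step metric, so the question reduces to perfect $e$-codes on a finite path. My plan is to tile $\{0,1,\ldots,\ell\}$ by disjoint intervals each equal to a radius-$e$ ball about some integer point, allowing the two end intervals to be ``half-balls'' anchored at $0$ and $\ell$. A short case analysis shows this is possible exactly when $\ell\ge 2e+1$, and the number of intervals needed is $\lceil(\ell+1)/(2e+1)\rceil$. For $\ell<2e+1$ the diameter is below the required minimum distance between codewords, and only trivial codes remain.

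For part (2), I would first exhibit the perfect code explicitly: take
\[
 \mathcal{C}\;=\;\bigl\{(2e+1,\,e,\,0),\,(0,\,2e+1,\,e),\,(e,\,0,\,2e+1)\bigr\}\;\subseteq\;\Delta_{3e+1}^2,
\]
check $d(c_i,c_j)=2e+1$ for all pairs, and verify by direct enumeration that the three balls $B(c_i,e)$ partition $\Delta_{3e+1}^2$. The second admissible code is the image of $\mathcal{C}$ under a transposition of two coordinates. For necessity (uniqueness of $\ell$ and of these two configurations), I would combine the sphere-packing identity $\sum_{c\in\mathcal{C}}|B(c,e)|=\binom{\ell+2}{2}$ with the local constraint that each of the three corner vertices $\ell\mathbf{e}_i$ must lie in exactly one ball: the covering codeword must satisfy $c_i\ge\ell-e$, which pins it to a definite face of $\Delta_\ell^2$. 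Computing ball sizes as explicit functions of the codeword's distance to each bounding edge and matching these against the total count should force $\ell=3e+1$, then $|\mathcal{C}|=3$, and finally the two specific configurations.

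Part (3) is where the real work lies. The plan is to leverage part (2) by restricting to $2$-faces. Every $2$-face of $\Delta_\ell^n$ is isometric to $\Delta_\ell^2$; if $\mathcal{C}$ were an $e$-perfect code in $\Delta_\ell^n$ with $n\ge 3$, then on each $2$-face $F$ the code would induce, after accounting for codewords sitting at distance $\le e$ off $F$, a covering of $F$ that must inherit, in a suitable sense, the perfect-code structure from part (2). This would force $\ell=3e+1$ and determine the codewords near $F$ up to one of two choices. Repeating the argument on a second $2$-face $F'$ sharing an edge with $F$ would yield two independent sets of constraints on the codewords in a neighborhood of the common edge, and the contradiction is that these constraints cannot be simultaneously satisfied once there are four or more coordinates.

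The main technical obstacle I anticipate is making the ``trace on a face'' reduction rigorous, because a codeword can contribute to the coverage of a face $F$ without lying on $F$ itself. One needs a careful bookkeeping of how far off $F$ a codeword can sit while still intersecting $F$ in a nonempty ball, and how the boundary corrections in the sphere-packing identity transform under restriction. If this direct face-reduction approach proves too delicate to deliver a clean contradiction, a fallback is a sphere-packing calculation using the interior ball volume $V_n(e)$ (a polynomial of degree $n$ in $e$) together with explicit boundary-correction terms; for $n\ge 3$ one would try to exhibit a divisibility or congruence obstruction showing that $V_n(e)$ and its boundary corrections simply cannot tile $\binom{\ell+n}{n}$ for any choice of $\ell$.
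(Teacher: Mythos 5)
Part (1) of your proposal is correct and coincides with the paper's argument (reduce to a path, tile by possibly clipped intervals of length $2e+1$). The genuine gaps are in parts (2) and (3). For part (2), the necessity direction cannot be carried by the sphere-packing identity plus the corner constraints alone: the identity $\sum_{c}|{\mathcal B}(c,e)|=\binom{\ell+2}{2}$ has far too many degrees of freedom once interior codewords (with full hexagonal balls of size $3e^2+3e+1$) and variously clipped boundary balls are allowed. For instance, with $e=1$, $\ell=5$ one has $|\Delta_5^2|=21=3\cdot 7$, which is arithmetically consistent with three full hexagons; the impossibility is purely geometric (no three points of $\Delta_5^2$ with all coordinates $\geq 1$ are pairwise at distance $\geq 3$). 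The paper's necessity proof is instead a chain of local forcing steps: the corner forces a codeword $x$ with $x_0\geq\ell-e$, the point $v$ forces a second codeword $y$ of a rigid form (Lemma \ref{prop_y_2D}), and a point $w$ trapped between ${\mathcal B}(x,e)$ and ${\mathcal B}(y,e)$ rules out all but two choices of $x$ (Lemmas \ref{prop_orthogonal_2D}, \ref{lemma_znogap}, \ref{lemma_x_2D}); continuing the chain (Proposition \ref{prop_otherl}) forces $\ell=3e+1$. You would need to supply an argument of this kind; ``matching ball sizes against the total count'' will not close the case.

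For part (3) -- which is the main content of the theorem -- your face-restriction plan has an unresolved hole that you yourself identify but do not fix. If $z\notin F$ for a $2$-face $F=\{x: x_i=0,\ i\geq 3\}$ and $\sigma=\sum_{i\geq 3}z_i>0$, then for $y\in F$ one has $d(z,y)=\tfrac12\sum_{i=0}^{2}|z_i-y_i|+\tfrac{\sigma}{2}$ with $(z_0,z_1,z_2)\in\Delta_{\ell-\sigma}^2$, so ${\mathcal B}(z,e)\cap F$ is the section of a cross-polytope centered off the hyperplane of $F$: in general a hexagon with unequal side lengths, \emph{not} a metric ball of $F$. Hence the induced covering of $F$ is not an $e$-perfect code in $\Delta_\ell^2$, part (2) does not apply to it, and the claimed forcing of $\ell=3e+1$ on each face does not follow. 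The fallback volume/divisibility route is also unpromising: in the closely related Lee/Golomb--Welch setting, ball volumes frequently divide the size of the space while tilings still fail for geometric reasons, so no congruence obstruction is to be expected. The paper avoids faces entirely and argues directly in $\Delta_\ell^n$: the corner forces $x$, the point $v$ forces $y$ whose positive coordinates among $\{2,\ldots,n\}$ are disjoint from those of $x$ (Lemma \ref{prop_ygeneral}), and a point $w$ trapped between the two balls along $f_{1,2}$ can only be covered by a $z$ displaced in directions orthogonal to $f_{1,2}$ (Lemma \ref{prop_orthogonal}); the disjoint-support property then makes it impossible for $z$ to be simultaneously at distance $2e+1$ from both $x$ and $y$. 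Without an argument at this level of precision, part (3) remains unproved in your proposal.
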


In addition to the existence proofs, we shall also enumerate in Section
\ref{proofs} all perfect codes in one- and two-dimensional simplices.
\section{Motivation -- Multiset codes}
\label{motivation}
\subsection{Permutation channel}
Let $ {\mathcal A} = \{ 0, 1, \ldots , n \} $ be a finite alphabet with
$ n + 1 \geq 2 $ symbols. A permutation channel over $ {\mathcal A} $ is
a communication channel that takes sequences of symbols from $ {\mathcal A} $
as inputs, and for any input sequence outputs a random permutation of
this sequence. Such channels arise, for example, in some types of packet
networks \cite{bertsekas} in which the packets comprising a single message
are routed separately and are frequently sent over different routes in
the network. Consequently, the receiver cannot rely on them being delivered
in any particular order.

In addition to random permutations, the channel is assumed to impose various
types of ``noise" on the transmitted sequence, such as insertions, deletions, 
and substitutions of symbols. For example, in a networking scenario mentioned
above, packet deletions can be caused by network congestion and consequent
buffer overflows in the routers, while packet substitutions (i.e., errors)
are usually caused by noise or malfunctioning of network equipment. Therefore,
the permutation channel with these types of impairments is indeed a relevant
model.
\subsection{Coding for the permutation channel}
It is clear from the definition of the permutation channel that, when
transmitting sequences through it, no information should be encoded in
the order of symbols in the sequence because it is impossible to recover
this information. The only carrier of information should be the
\emph{multiset} of the symbols sent, i.e., the number of occurrences of
each symbol from $ {\mathcal A} $ in the sequence. The appropriate space
in which error-correcting codes for the permutation channel should be
defined is therefore the set of all multisets over the channel alphabet
\cite{mladen_multiset}.

Formally, a multiset $ X $ is an ordered pair
$ ( {\mathcal A}, \mathbbm{m}_X ) $ where $ {\mathcal A} $ is the ground
set (the channel alphabet in our case) and
$ \mathbbm{m}_X : {\mathcal A} \to \mathbb{Z}_{+} $ is a
multiplicity function which encodes the numbers of occurrences of the
elements of $ \mathcal A $ in $ X $. The cardinality of $ X $ is the
number of elements it contains, including repetitions, namely
$ |X| = \sum_{i=0}^n \mathbbm{m}_X(i) $. Since the alphabet
$ {\mathcal A} = \{ 0, 1, \ldots , n \} $ is fixed, multisets can be
identified with their multiplicity functions, and these can be identified
with $ (n + 1) $-tuples
$ \big( \mathbbm{m}_X(0), \mysep \mathbbm{m}_X(1), \mysep \ldots , 
   \mysep \mathbbm{m}_X(n) \big) \in \mathbb{Z}_{+}^{n+1} $.
The set of all multisets over $ {\mathcal A} $ can therefore be identified
with the space $ \mathbb{Z}_{+}^{n+1} $, and the set of all
multisets of given cardinality $ \ell $ with
\begin{equation}
\left\{ \big( x_0, \mysep x_1, \mysep \ldots , \mysep x_n \big) \in \mathbb{Z}_{+}^{n+1}
        \mysep : \mysep \sum_{i=0}^n x_i = \ell \right\}
\end{equation}
which is precisely the discrete simplex $ \Delta_\ell^n $.
We shall focus here on the latter case only and study codes in $ \Delta_\ell^n $,
i.e., codes whose all codewords have the same cardinality. This convention
is also practically motivated because it somewhat simplifies the
communication protocol \cite{mladen_set, mladen_multiset}.
In order to study codes in $ \Delta_\ell^n $, it is convenient to introduce
a metric on this space so that the minimum distance of the code can be defined,
and guarantees on the number of correctable errors provided. A natural metric
on the space of multisets is the so-called symmetric difference metric
defined by:
\begin{equation}
  |X \bigtriangleup Y| = \sum_{i=0}^n |\mathbbm{m}_X(i) - \mathbbm{m}_Y(i)|
\end{equation}
where $ \bigtriangleup $ denotes the symmetric difference of (multi)sets.
This is obviously the $ \ell_1 $ distance between the sequences
$ \big( \mathbbm{m}_X(0), \mysep \ldots , \mysep \mathbbm{m}_X(n) \big) $
and $ \big( \mathbbm{m}_Y(0), \mysep \ldots , \mysep \mathbbm{m}_Y(n) \big) $.
We have thus shown that the metric space $ \left( \Delta_{\ell}^n, d \right) $
is an appropriate space for defining error-correcting codes in permutation
channels.

It should be pointed out that, even though we have defined codes in
$ \Delta_{\ell}^n $, the codewords from $ \Delta_{\ell}^n $ are not
what is actually sent through the permutation channel, they only describe
the multisets that are transmitted. Namely, if
$ \big(c_0, \mysep \ldots, \mysep c_n \big) $ is a codeword to be sent,
then what is actually transmitted are $ c_0 $ copies of the symbol $ 0 $,
$ c_1 $ copies of the symbol $ 1 $, etc. Therefore, for a multiset code
defined in $ \Delta_\ell^n $, $ \ell $ represents the length of the code
(the number of symbols in codewords = the cardinality of the codewords)
and $ n + 1 $ is the size of the alphabet.

\begin{remark}
 Note that in our setting the dimension of the code space depends on the
size of the alphabet ($ n + 1 $), not on the length of the code ($ \ell $).
This stands in sharp contrast with most other coding scenarios.
\end{remark}

\begin{remark}
 It is also interesting to observe that the Johnson space $ J(n+1, \ell) $
(the set of all binary sequences of length $ n+1 $ and Hamming weight $ \ell $)
is a subset of $ \Delta_\ell^n $. Furthermore, in the case of binary
sequences, the metric $ d $ from \eqref{eq_metric} reduces to the Johnson
metric.
However, (non)existence of perfect codes in $ \Delta_\ell^n $ does not imply
(non)existence of perfect codes in $ J(n+1, \ell) $, and the methods used
in this paper do not seem to be sufficient to settle Delsarte's conjecture
(except in some special cases that are already known, e.g., \cite[Cor 2]{etzion_schwartz}).
\end{remark}

Using the above terminology, we can now restate Theorem \ref{thm_main}
as follows:
\emph{
\mylist
 { Nontrivial $ e $-perfect multiset code of length $ \ell $ over a binary
   alphabet exists for every $ \ell \geq 2e + 1 $. Such a code has
   $ \big\lceil \frac{ \ell + 1 }{ 2e + 1 } \big\rceil $ codewords. }
 { Nontrivial $ e $-perfect multiset code of length $ \ell $ over a ternary
   alphabet exists if and only if $ \ell = 3e + 1 $. Furthermore, there are
   exactly two $ e $-perfect multiset codes of length $ 3e + 1 $, each having
   three codewords. }
 { Nontrivial $ e $-perfect multiset code of length $ \ell $ over a $ q $-ary
   alphabet, $ q > 2 $, does not exist for any $ e $ and $ \ell $. }
 }

Finally, we note that the framework presented in this section is a
generalization of coding in power sets \cite{gadouleau1, gadouleau2, mladen_set},
where codewords are taken to be \emph{sets} rather than multisets. Such
approaches to coding for the permutation channel are somewhat analogous
to the approach of K\"otter and Kschischang \cite{kk} of using codes in
projective spaces and Grassmanians for error correction in networks
employing random linear network coding. In both cases, the guiding idea
is to define codes in the space of objects invariant under the channel
transformation -- (multi)sets are invariant under permutations, whereas
vector spaces are invariant (with high probability) under random linear
combinations.
\section{Proofs}
\label{proofs}
We now proceed with the proof of our claim. To that end, it will be useful
to represent the simplex $ \Delta_\ell^n $ as the corresponding graph with
$ { n+\ell \choose \ell } $ vertices, and with edges connecting vertices at
distance one. As noted in Section \ref{notation}, such a representation will
allow us to visualize the spaces under study, at least in the case of binary
and ternary alphabets.
\subsection{Binary alphabet}
\label{proof_binary}
One-dimensional case is simple to analyze. The space
\begin{equation}
 \Delta_\ell^1 
   = \left\{ \big(\ell-t , \mysep  t)  \mysep : \mysep  t = 0, \ldots, \ell \right\}
\end{equation}
can be represented as a \emph{path} with $ \left| \Delta_\ell^1 \right| = \ell + 1 $
vertices, the leftmost vertex being $ \big( \ell, \mysep  0 \big) $ and
the rightmost $ \big( 0, \mysep  \ell \big) $ for example (see Figure
\ref{fig_binary}).

Since the diameter of $ \left( \Delta_{\ell}^1, d \right) $ is $ \ell $
and any two codewords of an $ e $-perfect code must be at distance
$ \geq 2e + 1 $, nontrivial code can exist only if $ \ell \geq 2e + 1 $.
It is not hard to conclude that a perfect code exists for any such $ \ell $
(see also \cite{domination} for the case $ e = 1 $).
Figure \ref{fig_binary} provides an illustration of such a code, and
Proposition \ref{thm_allperfbinary} lists all perfect codes in $ \Delta_\ell^1 $.
\begin{figure}[h]
\centering
   \includegraphics[width=0.7\columnwidth]{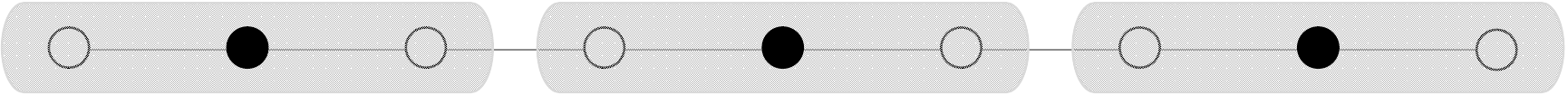}
 \caption{ $ 1 $-perfect code in $ \Delta_8^1 $ ($ n = 1 $, $ \ell = 8 $, $ e = 1 $).
           Black dots represent codewords; dots belonging to a gray region
           comprise the decoding region of the corresponding codeword. }
\label{fig_binary}
\end{figure}%
\begin{proposition}
\label{thm_allperfbinary}
 Let $ \ell = q(2e + 1) + r $ for some $ q \geq 1 $, $ 0 \leq r < 2e + 1 $.
Then there are exactly $ M = \min\{ r + 1, 2e + 1 - r \} > 0 $ perfect codes
in $ \Delta_\ell^1 $, each having
$ q + 1 = \big\lceil \frac{ \ell + 1 }{ 2e + 1 } \big\rceil $ codewords.
Let also $ s = \min\{ r, e \}$. Then all perfect codes in $ \Delta_\ell^1 $
can be enumerated as
\begin{equation}
\label{eq_allperfbinary}
 {\mathcal C}_1^{(m)} = 
   \Big\{ \big( \ell - s + m - 1 - i(2e + 1) , \mysep  
                s - m + 1 + i(2e + 1) \big)  \mysep :
          \mysep  i = 0, \ldots, q 
   \Big\} ,
\end{equation}
for $ m = 1, \ldots, M $.
\end{proposition}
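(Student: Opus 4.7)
The plan is to work with the path graph representation of $\Delta_\ell^1$, identifying each vertex $(\ell - t, t)$ with its second coordinate $t \in \{0, 1, \ldots, \ell\}$. Under this identification, the $e$-ball around position $t$ is the interval $[t-e, t+e] \cap \{0, \ldots, \ell\}$, so an $e$-perfect code amounts to a selection of positions whose $e$-intervals partition the path.

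First I would order the codewords $0 \leq c_1 < c_2 < \cdots < c_k \leq \ell$ and argue that disjointness forces $c_{i+1} - c_i \geq 2e+1$, while covering the vertex $c_i + e + 1$ forces the reverse inequality; hence consecutive codewords lie at distance exactly $2e+1$, so $c_k = c_1 + (k-1)(2e+1)$. The endpoints $0$ and $\ell$ of the path must be covered, forcing $0 \leq c_1 \leq e$ and $\ell - e \leq c_k \leq \ell$. Substituting the two relations into one another and writing $\ell = q(2e+1) + r$ yields
\[
 \max\bigl(0,\, (q-k+1)(2e+1) + r - e\bigr) \;\leq\; c_1 \;\leq\; \min\bigl(e,\, (q-k+1)(2e+1) + r\bigr).
\]
I would then pin down $k = q+1$ by ruling out $k \geq q+2$ (the upper bound becomes negative) and $k \leq q$ (the lower bound exceeds $e$), which simultaneously establishes the claimed cardinality $q + 1 = \lceil (\ell+1)/(2e+1)\rceil$.

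With $k = q + 1$ fixed, the feasible range for the smallest codeword collapses to $c_1 \in [\max(0, r-e),\, \min(e, r)]$, and I would split into the two cases $r \leq e$ (yielding $r+1$ admissible values of $c_1$) and $r > e$ (yielding $2e+1-r$ admissible values), both compactly written as $M = \min(r+1,\, 2e+1-r)$. Since the full code is determined by $c_1$ via the arithmetic progression $c_1 + i(2e+1)$, $i = 0, \ldots, q$, this simultaneously counts the codes and enumerates them.

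To match the parameterization in \eqref{eq_allperfbinary}, I would set $s = \min(r,e)$ and observe that $c_1 = s - m + 1$ for $m = 1, \ldots, M$ sweeps exactly the interval $[\max(0, r-e), \min(e, r)]$ in both cases, so the second coordinate of the $i$-th codeword is $s - m + 1 + i(2e+1)$ as stated. I do not expect a genuine obstacle; the only delicate point is the two-sided boundary book-keeping at the endpoints of the path, which is handled cleanly by the case split $r \leq e$ versus $r > e$.
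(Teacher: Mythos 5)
Your proposal is correct and follows essentially the same route as the paper's proof: both rest on the observation that consecutive codewords on the path must be exactly $2e+1$ apart, so a perfect code is an arithmetic progression determined by its leftmost codeword, and the admissible leftmost codewords are then read off from the requirement that both endpoints of the path be covered. Your version is somewhat more systematic in that it explicitly pins down the number of codewords $k=q+1$ from the two-sided constraint before counting, whereas the paper asserts the progression structure geometrically and verifies the endpoint conditions via the case split $r\le e$ versus $r>e$.
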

\begin{proof}
 Considering the geometry of the space $ \Delta_\ell^1 $ and the corresponding
graph, it is clear that a perfect code has to be of the form
\begin{equation}
 \Big\{ \big( \ell - j - i(2e + 1) , \mysep  j + i(2e + 1) \big) \Big\} ,
\end{equation}
for some fixed $ j $, and for $ i $ ranging from $ 0 $ to some largest value.
Namely, once we have fixed the ``leftmost" codeword
$ \big( \ell - j, \mysep  j \big) $, all the other codewords are determined
by the fact that neighboring codewords have to be at distance $ 2e + 1 $
from each other. In that way we ensure that the decoding regions are disjoint
and that all intermediate points are covered. Therefore, to prove that
$ {\mathcal C}_1^{(m)} $ are perfect, i.e., that the entire $ \Delta_\ell^1 $
is covered, it is enough to show that the endpoints
$ \big( \ell , \mysep  0 \big) $ and $ \big( 0 , \mysep  \ell \big) $
are covered. Assume that $ r \leq e $, in which case $ M = r + 1 $ and $ s = r $.
Then $ 0 \leq s - m + 1 \leq r \leq e $, and hence the vertex
$ \big( \ell , \mysep  0 \big) $ is at distance $ \leq e $ from the codeword
$ \big( \ell - s + m - 1 , \mysep  s - m + 1 \big) $.
Similarly, $ 0 \leq r - s + m - 1 \leq r \leq e $ and therefore the vertex
$ \big( 0 , \mysep  \ell \big) $ is at distance $ \leq e $ from the codeword
$ \big( r - s + m - 1 , \mysep  \ell - r + s - m + 1 \big) $ (obtained for
$ i = q $ in \eqref{eq_allperfbinary}). Similar analysis applies when
$ r > e $. This proves that the codes $ {\mathcal C}_1^{(m)} $ are perfect.

It is left to prove that \eqref{eq_allperfbinary} lists all perfect codes
in $ \Delta_\ell^1 $. Assume that $ r \leq e $. In that case the ``leftmost"
codeword of $ {\mathcal C}_1^{(m)} $ is
$ \big( \ell - r + m - 1 , \mysep r - m + 1 \big) $, $ m = 1, \ldots, r + 1 $.
Therefore, we have found $ r + 1 $ codes with ``leftmost" codewords
$ \big( \ell , \mysep 0 \big) , \ldots, \big( \ell - r , \mysep r \big) $.
Suppose that we try to construct another perfect code by specifying
$ \big( \ell - r - k , \mysep r + k \big) $, $ k > 0 $, as its ``leftmost"
codeword. Since the end point $ \big( \ell , \mysep  0 \big) $ has to be
covered, we can assume that $ k \leq e - r $.
Then its ``rightmost" codeword is obtained by shifting for $ i(2e + 1) $
and is therefore either $ \big( 2e + 1 - k , \mysep \ell - 2e - 1 + k \big) $
(for $ i = q - 1 $) or $ \big( -k , \mysep \ell + k \big) $ (for $ i = q $).
The second case is clearly impossible, and the first fails to give a perfect
code because the point $ \big( 0, \mysep \ell \big) $ does not belong to a
decoding region of some codeword (its distance from the ``rightmost" codeword
is $ 2e + 1 - k > e $). Again, the proof is similar for $ r > e $.
\end{proof}
\subsection{Ternary alphabet}
\label{proof_ternary}
\par Consider now the two-dimensional simplex $ \Delta_{\ell}^2 $.
The graph representation of this space is a triangular grid graph, as
illustrated in Figure \ref{fig_2D_perfect} (we assume that the leftmost
vertex corresponds to $ \big( \ell, \mysep 0 , \mysep 0 \big) $, the
rightmost to $ \big( 0, \mysep \ell , \mysep 0 \big) $, and the top to
$ \big( 0, \mysep 0 , \mysep \ell \big) $). Balls under the metric $ d $ in
this graph are ``hexagons'', perhaps clipped if the center of the ball is
too close to the edge (in fact, this space is easily seen to be a ``triangle''
cut out from the hexagonal lattice, see Figure \ref{fig_2D_perfect}).
Hence, we need to examine whether a perfect packing of hexagons is
possible within this graph, i.e., whether there is a configuration of
hexagons covering the entire graph without overlapping. We first briefly
discuss some properties of $ \Delta_\ell^2 $ that will be useful.

Observe that, given some $ x \in \Delta_\ell^2 $, we can express
any point $ y \in \Delta_\ell^2 $ by specifying a path from $ x $ to
$ y $ in the corresponding graph. The first node on this path, call it
$ x' $, is a neighbor of $ x $, the second node is a neighbor of $ x' $,
etc. The neighbors of $ x = \big( x_0, \mysep x_1, \mysep x_2 \big) $,
i.e., points that are at distance $ 1 $ from it, are obtained by adding
$ 1 $ to some coordinate of $ x $, and $ -1 $ to some other coordinate.
A convenient way of describing neighbors and paths in $ \Delta_\ell^2 $
is as follows. Define the vector $ f_{i,j} $, $ i,j \in \{1,2,3\} $, to
have a $ 1 $ at the $ i $'th position, a $ -1 $ at the $ j $'th position,
and a $ 0 $ at the remaining position. For example,
$ f_{1,2} = \big( 1, \mysep -1 , \mysep 0 \big) $.
Clearly, $ f_{i,j} = - f_{j,i} $ and by convention we take
$ f_{i,i} = \big( 0, \mysep 0, \mysep 0 \big) $.
These vectors describe all possible directions of moving from some point,
and hence any neighbor $ x' $ of $ x $ can be described by specifying
the direction, namely $ x'= x + f_{i,j} $ (see Figure \ref{fig_directions}).
Therefore, any $ y \in \Delta_\ell^2 $ can be expressed as
\begin{equation}
 y = x + \sum_{i,j} \alpha_{i,j} f_{i,j}
\end{equation}
for some integers $ \alpha_{i,j} \geq 0 $.
\begin{figure}[h]
 \centering
    \includegraphics[width=0.5\columnwidth]{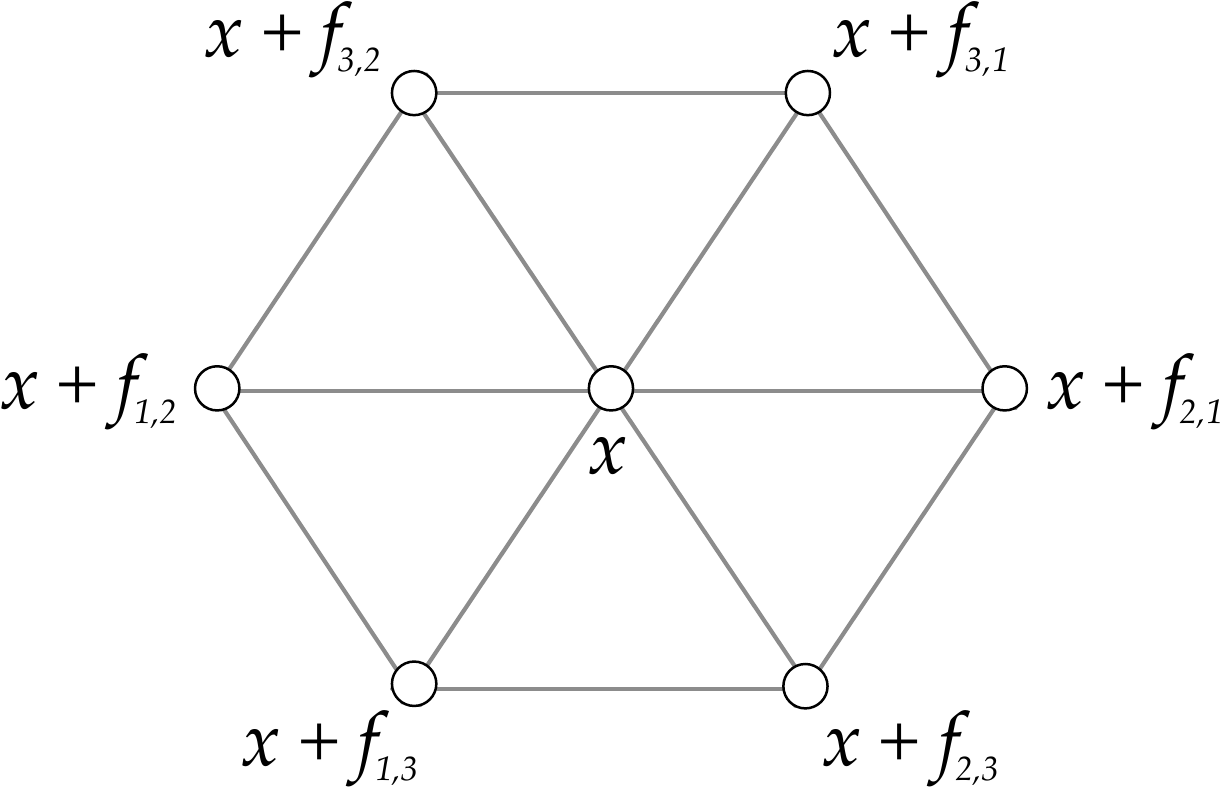}
  \caption{Neighbors of $ x $ in $ \Delta_\ell^2 $.}
\label{fig_directions}
\end{figure}%
If $ d(x,y) = \delta $, then clearly there exists a representation of this
form with $ \sum_{i,j} \alpha_{i,j} = \delta $. Another way to write this is
\begin{equation}
 y = x + \big( s_0, \mysep s_1, \mysep s_2 \big)
\end{equation}
where $ \sum_i s_i = 0 $ and $ \sum_i |s_i| = 2\delta $.

The following lemma will also be used in the sequel. The statement is
illustrated in Figure \ref{fig_2D_l00}, and the proof (of the
more general version) is given in the following subsection (see Lemma
\ref{prop_orthogonal} and Remark \ref{rem_orthogonal_2D}).
\begin{lemma}
\label{prop_orthogonal_2D}
 Let $ x, y, w \in \Delta_\ell^2 $ be such that $ d(x,w) = d(y,w) = e + 1 $,
$ d( x, w + f_{1,2} ) = e $, and $ d( y, w + f_{2,1} ) = e $. Then there can
be no $ z \in \Delta_\ell^2 $ such that $ w \in { \mathcal B }(z,e) $,
$ { \mathcal B }(x,e) \cap { \mathcal B }(z,e) = \emptyset $ and
$ { \mathcal B }(y,e) \cap { \mathcal B }(z,e) = \emptyset $.
\end{lemma}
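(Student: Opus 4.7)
The plan is to translate the four distance hypotheses into coordinate-wise constraints on $w,x,y$, and then run a triangle-inequality argument that forces any candidate $z$ to coincide with $w$. Writing $w=(w_0,w_1,w_2)$, $x=(x_0,x_1,x_2)$, $y=(y_0,y_1,y_2)$, I would first analyze the effect of replacing $w$ by $w+f_{1,2}$ on $\tfrac12\sum_i|x_i-w_i|$: for this quantity to drop from $e+1$ to $e$ (a total decrease of $2$ in the unnormalized sum), both $|x_0-w_0|$ and $|x_1-w_1|$ must decrease by $1$, which forces $x_0>w_0$ and $x_1<w_1$. The symmetric analysis applied to $w+f_{2,1}$ and $y$ yields $y_0<w_0$ and $y_1>w_1$.

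Next, I would assume for contradiction that such a $z$ exists. Chaining the bounds
\begin{equation*}
 2e+1 \;\leq\; d(z,x) \;\leq\; d(z,w) + d(w,x) \;\leq\; e + (e+1) \;=\; 2e+1,
\end{equation*}
I would conclude that $d(z,w)=e$ and that equality holds in the triangle inequality $d(z,x)=d(z,w)+d(w,x)$; an identical chain with $y$ in place of $x$ gives $d(z,y)=d(z,w)+d(w,y)$. Since $e\geq 1$, in particular $z\neq w$.

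The final step uses the standard characterization of equality in the $\ell_1$ triangle inequality: $d(z,x)=d(z,w)+d(w,x)$ is equivalent to $(z_i-w_i)$ and $(w_i-x_i)$ having a common sign (or at least one being zero) for every index $i$. Feeding in $x_0>w_0$ (so $w_0-x_0<0$) forces $z_0\leq w_0$, and $x_1<w_1$ forces $z_1\geq w_1$. The corresponding relations derived from $y$ give the reverse inequalities $z_0\geq w_0$ and $z_1\leq w_1$. Hence $z_0=w_0$ and $z_1=w_1$, and the simplex constraint $\sum_i z_i=\sum_i w_i=\ell$ then forces $z_2=w_2$, so $z=w$, contradicting $d(z,w)=e\geq 1$.

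There is no serious obstacle here: once the coordinate consequences of the hypotheses are read off, the rest is a routine application of the equality case in the $\ell_1$ triangle inequality. The one place where care is needed is the unpacking of $d(x,w+f_{1,2})=e$ and $d(y,w+f_{2,1})=e$ into \emph{strict} coordinate inequalities, since a sloppy treatment of signs would yield only $x_0\geq w_0$, etc., which is too weak to pin $z$ down to $w$.
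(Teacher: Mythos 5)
Your proof is correct. The hypotheses $d(x,w+f_{1,2})=e$ and $d(y,w+f_{2,1})=e$ do pin down the strict inequalities $x_0>w_0$, $x_1<w_1$, $y_0<w_0$, $y_1>w_1$ exactly as you describe, the chain $2e+1\le d(z,x)\le d(z,w)+d(w,x)\le 2e+1$ and its analogue for $y$ force $d(z,w)=e$ together with equality in both triangle inequalities, and the termwise sign condition then collapses $z$ to $w$. The one step you leave implicit is that disjointness of $\mathcal{B}(x,e)$ and $\mathcal{B}(z,e)$ implies $d(x,z)\ge 2e+1$; this needs the fact that $d$ coincides with the graph (geodesic) metric on $\Delta_\ell^2$, so that any two points at distance at most $2e$ admit a common point within distance $e$ of both. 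That fact is true and used throughout the paper without comment, but it deserves a sentence. Your route is genuinely different from the paper's: there, the two-dimensional lemma is obtained as a corollary of the $n$-dimensional Lemma \ref{prop_orthogonal}, whose proof decomposes $z-w$ into $e$ unit steps $f_{i,j}$ and uses the rewriting $f_{1,3}=f_{1,2}+f_{2,3}$ to show that any step touching the first two coordinates would place $w+f_{1,2}$ (or $w+f_{2,1}$) inside $\mathcal{B}(z,e)\cap\mathcal{B}(x,e)$, forcing $z-w$ into the subspace of directions orthogonal to $f_{1,2}$ --- which is empty in two dimensions. Your equality-case-of-$\ell_1$ argument reaches the same conclusion ($z_0=w_0$, $z_1=w_1$) more directly, and it would in fact generalize verbatim to dimension $n$, where the sign conditions constrain only the first two coordinates and thus recover exactly the conclusion of Lemma \ref{prop_orthogonal}; what the paper's formulation buys is that explicit description of the admissible $z$ in higher dimensions, which it needs for the nonexistence argument later.
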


Let us elaborate on the meaning of this lemma. Suppose we have two codewords
($ x, y $) and a point $ w $ lying outside their decoding regions. Since we are
trying to build a perfect code, the point $ w $ has to belong to a decoding
region of a third codeword $ z $. The lemma asserts that if $ w $ is bounded
by $ { \mathcal B }(x,e) $ and $ { \mathcal B }(y,e) $
in some direction, say $ f_{1,2} $ (recall that $ f_{2,1} = -f_{1,2} $),
then such a codeword cannot exist, and therefore $ x $ and $ y $ cannot be
codewords of a perfect code.

\begin{figure}[h]
 \centering
  \includegraphics[width=0.55\columnwidth]{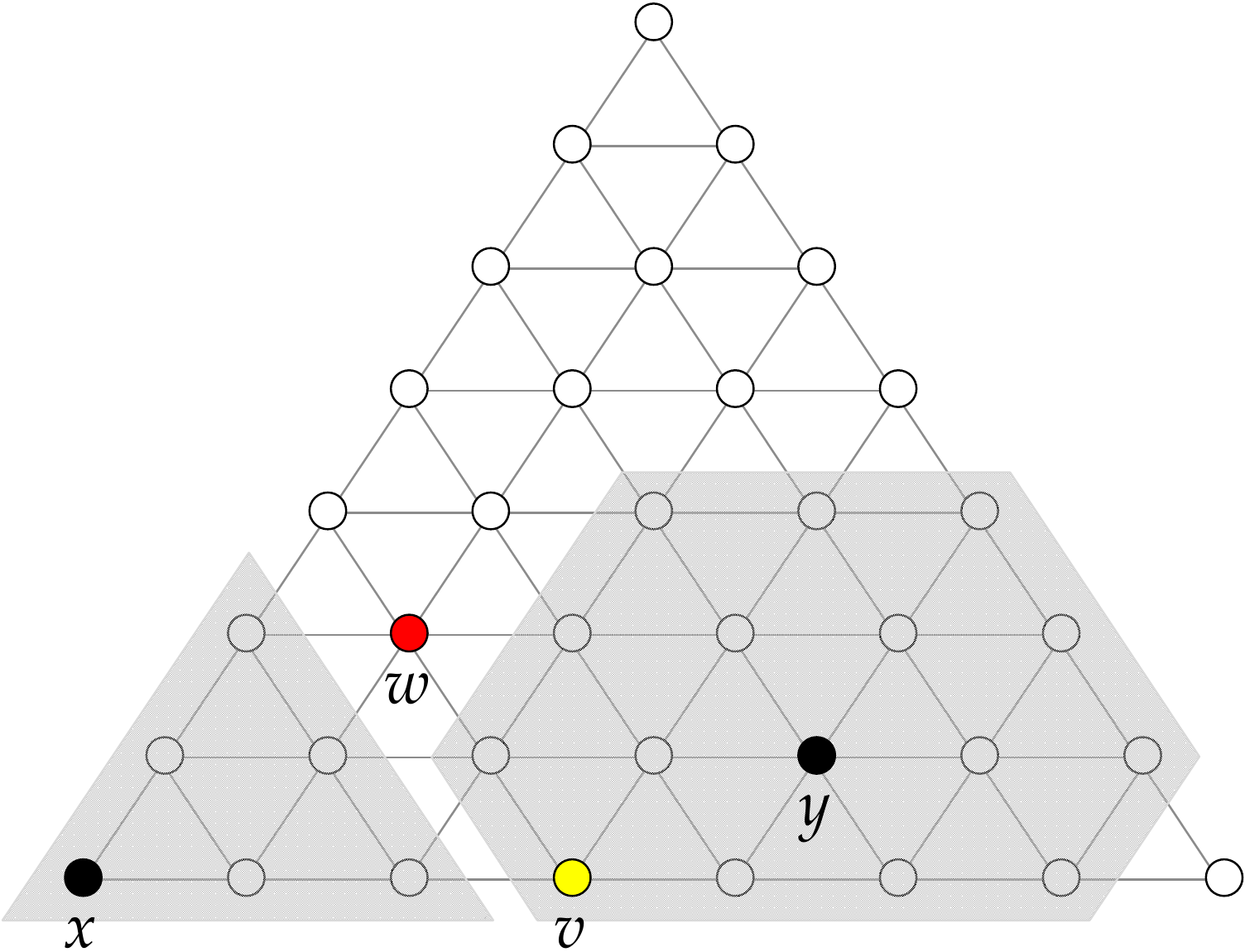}
  \caption{Illustration of Lemma \ref{prop_orthogonal_2D} and Lemma \ref{lemma_x_2D}.}
\label{fig_2D_l00}
\end{figure}%
%
%

We now proceed with proof of the main claim, namely the (non)existence
of perfect codes. If $ \ell = 3e + 1 $, then it is not hard to exhibit
a perfect code (see Figure \ref{fig_2D_perfect}).
In fact, there are exactly two such codes:
\begin{equation}
\label{eq_perfect_2D}
 \begin{aligned}
  {\mathcal C}_2^{(1)}  &= 
    \Big\{ \big( 2e + 1 , \mysep e , \mysep 0 \big) , \mysep
           \big( 0 , \mysep 2e + 1 , \mysep e \big) , \mysep
           \big( e , \mysep 0 , \mysep 2e + 1 \big)
    \Big\}  \\
  {\mathcal C}_2^{(2)}  &= 
    \Big\{ \big( 2e + 1 , \mysep 0 , \mysep e \big) , \mysep
           \big( e , \mysep 2e + 1 , \mysep 0 \big) , \mysep
           \big( 0 , \mysep e , \mysep 2e + 1 \big)
    \Big\} .
 \end{aligned}
\end{equation}
\begin{figure}[h]
\centering
 \includegraphics[width=0.55\columnwidth]{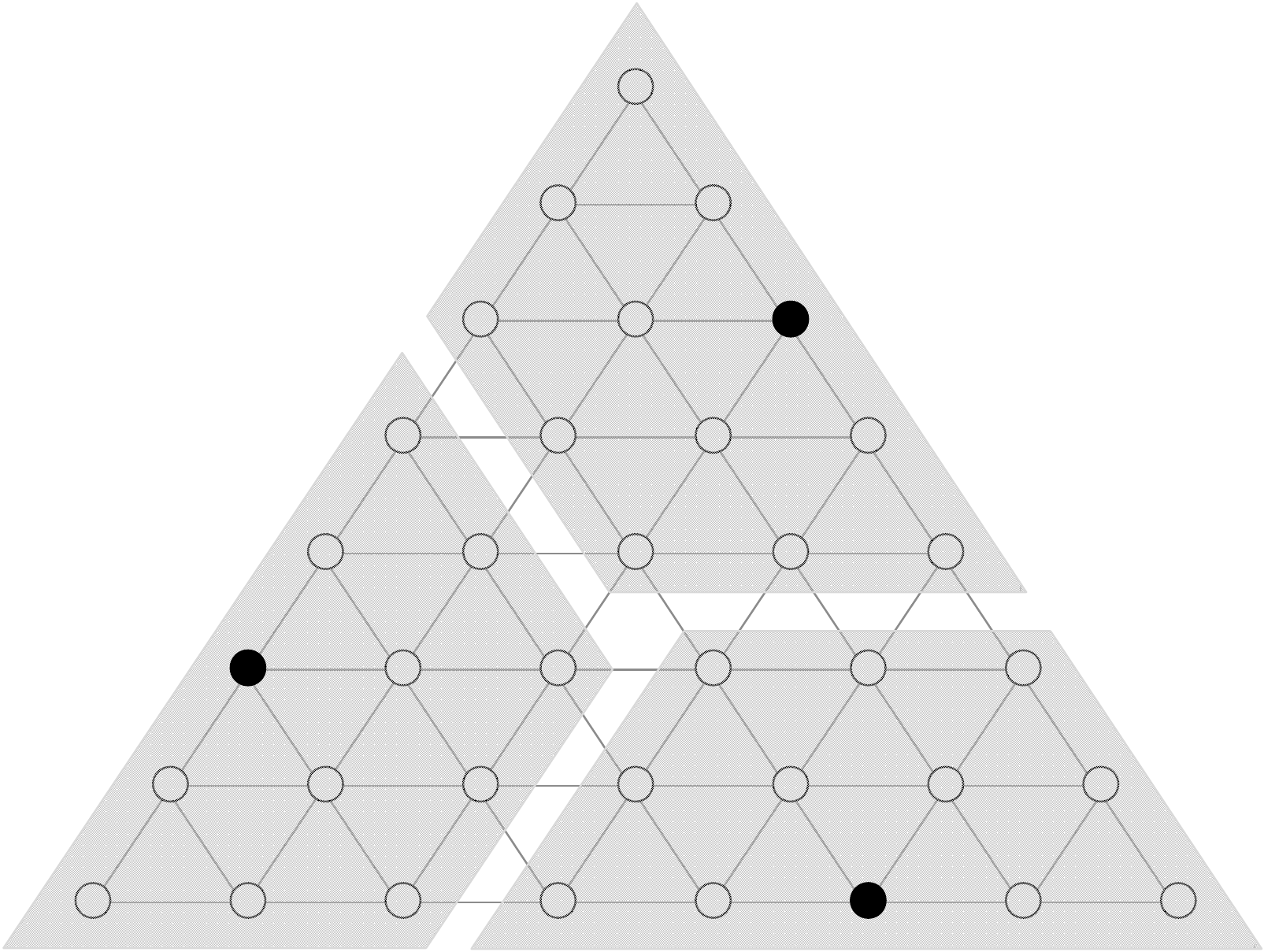}
 \caption{$ 2 $-perfect code ($ {\mathcal C}_2^{(2)} $) in
          $ \Delta_7^2 $ ($ n = 2 $, $ \ell = 7 $, $ e = 2 $).}
\label{fig_2D_perfect}
\end{figure}%
\begin{proposition}
 Codes $ {\mathcal C}_2^{(1)} $ and $ {\mathcal C}_2^{(2)} $ are $ e $-perfect
in $ \Delta_{3e+1}^2 $.
\end{proposition}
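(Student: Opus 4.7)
The plan is to verify the two defining properties of an $e$-perfect code: pairwise disjointness of the balls $\mathcal{B}(c,e)$ for $c\in\mathcal{C}_2^{(i)}$, and complete covering of $\Delta_{3e+1}^2$. Since each code has only three codewords related by cyclic permutation of coordinates, only two quantities actually need computation: the pairwise distance between codewords, and the size of a single ball; everything else follows from the obvious $S_3$-symmetry of $\Delta_\ell^2$.

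First I would compute all pairwise distances using $d(x,y)=\tfrac12\sum_i|x_i-y_i|$. A direct evaluation for two codewords of $\mathcal{C}_2^{(1)}$ gives $d\bigl((2e+1,e,0),(0,2e+1,e)\bigr)=\tfrac12\bigl((2e+1)+(e+1)+e\bigr)=2e+1$, and the other two pairs are obtained from this one by cyclic permutation of coordinates (which preserves $d$). The same computation works for $\mathcal{C}_2^{(2)}$. Since codewords are at distance exactly $2e+1$, the balls of radius $e$ around them are pairwise disjoint.

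Next I would compute $|\mathcal{B}((2e+1,e,0),e)|$. A point $y=(y_0,y_1,y_2)$ lies in this ball iff $y_0+y_1+y_2=3e+1$, $y_i\geq 0$, and $|y_0-(2e+1)|+|y_1-e|+y_2\leq 2e$. Parametrize by $y_2=t$ and substitute $y_1=3e+1-t-y_0$; a short case analysis on the signs of $y_0-(2e+1)$ and $(2e+1-t)-y_0$ shows that the admissible range of $y_0$ contains exactly $2e+1-t$ integers for each $t\in\{0,1,\ldots,e\}$ and is empty for $t>e$. Summing gives
\[
 |\mathcal{B}((2e+1,e,0),e)| \;=\; \sum_{t=0}^{e}(2e+1-t) \;=\; \tfrac{(e+1)(3e+2)}{2}.
\]
By the coordinate-permutation symmetry of $\Delta_{3e+1}^2$, the balls around the other codewords (in either $\mathcal{C}_2^{(1)}$ or $\mathcal{C}_2^{(2)}$) have the same size.

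Finally, a direct comparison with $|\Delta_{3e+1}^2|=\binom{3e+3}{2}=\tfrac{3(e+1)(3e+2)}{2}$ shows that the three disjoint balls contain exactly $|\Delta_{3e+1}^2|$ points, and therefore partition the space. The only real obstacle is the bookkeeping in the ball-count: each codeword sits on the boundary of the simplex, so the ball is clipped and does not attain the full ambient-lattice size $3e^2+3e+1$; the case analysis for $y_0$ must be done carefully so that boundary effects are accounted for and no points are double-counted.
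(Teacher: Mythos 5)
Your proof is correct and complete. Note that the paper itself omits the proof of this proposition (declaring it ``straightforward''), so there is no argument of the authors' to compare against; your write-up actually supplies the missing verification. The two computations check out: all pairwise distances equal $\tfrac12(4e+2)=2e+1$, so the radius-$e$ balls are disjoint by the triangle inequality, and the slice count $2e+1-t$ for $y_2=t\in\{0,\dots,e\}$ gives $|\mathcal{B}(c,e)|=\tfrac{(e+1)(3e+2)}{2}$, which is exactly one third of $|\Delta_{3e+1}^2|=\binom{3e+3}{2}$, so disjointness plus the volume count forces the balls to partition the space. This ``perfect packing by counting'' route is arguably cleaner than checking covering pointwise, since it exploits the $S_3$-symmetry and reduces everything to one ball-size computation; the only place requiring care is the one you flag, namely that the codewords lie on the boundary so the clipped ball is smaller than the unconstrained hexagon of size $3e^2+3e+1$, and your interval analysis $y_0\in[e+1,\,3e+1-t]$ handles that correctly.
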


The proof of the proposition is straightforward and is omitted.
In the following we prove that these are the only two perfect
codes when $ \ell = 3e + 1 $, and that there are no perfect
codes for $ \ell \neq 3e + 1 $.

We start by observing the vertex $ \big( \ell , \mysep  0 , \mysep  0 \big) $.
For this vertex to be covered there must exist a codeword of the form
\begin{equation}
\label{eq_x_2D}
 x =  \big( \ell - t , \mysep  x_1 , \mysep  x_2 \big)
\end{equation}
with $ x_1 + x_2 = t \leq e $.
Observe now the point
\begin{equation}
\label{eq_v_2D}
 v =  \big( \ell - x_1 - e - 1 , \mysep  x_1 + e + 1 , \mysep  0 \big) .
\end{equation}
(Needless to say, we assume that $ v \in \Delta_\ell^2 $, i.e., that
$ v_0 = \ell - x_1 - e - 1 \geq 0 $; otherwise, the diameter of
$ \Delta_\ell^2 $ would be $ \ell \leq 2e $ and no nontrivial perfect
code could exist.)
We have $ d(x,v) = e + 1 $ and so the point $ v $ is not covered by
$ { \mathcal B }(x,e) $. To cover it we need another codeword $ y $
with $ d(v,y) = e $ and $ d(x,y) = 2e + 1 $.
\begin{lemma}
\label{prop_y_2D}
 Let $ x, v \in \Delta_\ell^2 $ be given by \eqref{eq_x_2D} and \eqref{eq_v_2D},
respectively. Then the point $ y \in \Delta_\ell^2 $ satisfying
$ d(v,y) = e $, $ d(x,y) = 2e + 1 $ is of the form
\begin{equation}
\label{eq_y_2D}
 y =  \big( \ell - x_1 - 2e - 1 , \mysep  x_1 + e + 1 + u , \mysep  e - u \big)
\end{equation}
with $ 0 \leq u \leq e $, and with the property that
\begin{equation}
\label{eq_yxzero_2D}
 x_2 > 0  \Rightarrow  u = e .
\end{equation}
\end{lemma}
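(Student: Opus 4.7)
The plan is to leverage the triangle inequality: since $d(x,v)+d(v,y) = (e+1)+e = 2e+1 = d(x,y)$, the point $v$ lies on a geodesic from $x$ to $y$. Writing $r := v - x$ and $s := y - x$, this is the identity $\lVert s\rVert_1 = \lVert r\rVert_1 + \lVert s-r\rVert_1$, i.e.\ the tight case of the $\ell_1$ triangle inequality. It forces coordinate-wise alignment: $r_i(s_i - r_i) \ge 0$ for every $i$, meaning that in each coordinate $r_i$ and $s_i - r_i$ have matching signs (either being allowed to vanish). This reduction to a sign-compatibility statement is the key structural observation; the rest of the proof is then essentially a small linear calculation in three unknowns.

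A direct computation from the definitions gives $r = (x_2 - e - 1,\, e + 1,\, -x_2)$. Since $x_2 \le x_1 + x_2 = t \le e$, one has $r_0 < 0$ and $r_1 > 0$ unconditionally, while $r_2 \le 0$ with equality iff $x_2 = 0$. The alignment conditions therefore read $s_0 \le x_2 - e - 1$, $s_1 \ge e + 1$, and, \emph{only} when $x_2 > 0$, also $s_2 \le -x_2$. I would then combine these with the remaining constraints on $s$: the hyperplane identity $s_0 + s_1 + s_2 = 0$ (equivalent to $y \in \Delta_\ell^2$), the norm identity $\lVert s\rVert_1 = 4e+2$ (from $d(x,y) = 2e+1$), and the nonnegativity conditions $x_i + s_i \ge 0$.

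The two subcases now fall out cleanly. When $x_2 > 0$, the bound $s_2 \le -x_2$ combined with $y_2 = x_2 + s_2 \ge 0$ pins down $s_2 = -x_2$; the hyperplane and norm equations then force $s_1 = 2e + 1$ and $s_0 = x_2 - 2e - 1$, which is exactly the stated form with $u = e$, establishing \eqref{eq_yxzero_2D}. When $x_2 = 0$, the third-coordinate alignment condition disappears, leaving $s_2 \ge 0$ as the only local constraint; one checks that $s_0 = -(2e+1)$ is forced while $s_2$ is free to range over $[0, e]$ (the upper bound coming from $s_1 \ge e+1$), so that setting $u := e - s_2$ recovers the full parametrization.

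The principal obstacle is the bookkeeping around the degenerate case $r_2 = 0$: it is precisely the vanishing of this coordinate when $x_2 = 0$ that frees an extra degree of freedom in $s$, and conversely its non-vanishing when $x_2 > 0$ that collapses $u$ to the single value $e$. Once the equality case of the $\ell_1$ triangle inequality is correctly articulated and the nonnegativity bounds are invoked at the right moment, the rest is routine.
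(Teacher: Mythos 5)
Your proof is correct, but it takes a genuinely different route from the paper's. The paper sets $y_0 = \ell - x_1 - 2e - 1 + s$ and eliminates $s<0$ and $s>0$ by separate arguments; in particular, the case $s>0$ is handled by assuming $x_0 > y_0$, which is justified not from the lemma's stated hypotheses but from the surrounding perfect-code context (``otherwise the vertex $(\ell,0,0)$ would be covered by both $x$ and $y$''). You instead observe that $d(x,v)+d(v,y)=(e+1)+e=2e+1=d(x,y)$ is a tight instance of the triangle inequality, extract the coordinatewise sign-alignment $r_i(s_i-r_i)\ge 0$ from the equality case of $\lVert s\rVert_1 \le \lVert r\rVert_1 + \lVert s-r\rVert_1$, and finish with a three-variable linear computation. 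I verified the details: $r=(x_2-e-1,\, e+1,\, -x_2)$ with $r_0<0$, $r_1>0$, $r_2\le 0$; the pinning $s_2=-x_2$ (hence $y_2=0$, $u=e$) when $x_2>0$; and the forced value $s_0=-(2e+1)$ together with $0\le s_2\le e$ when $x_2=0$ --- all correct, and the signs of $s_0,s_1$ needed to expand $\lVert s\rVert_1$ are supplied by the alignment conditions. Your approach buys two things: the lemma is established exactly as stated, as a purely metric fact with no appeal to $x$ and $y$ being codewords covering distinct points; and the alignment mechanism is precisely what carries over unchanged to the $n$-dimensional analogue (Lemma \ref{prop_ygeneral}), whereas the paper repeats its coordinate case analysis there. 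The paper's argument, for its part, is more elementary and stays closer to the explicit coordinates used in the rest of Section \ref{proof_ternary}. The only step you should spell out in a final write-up is the passage from equality of the summed $\ell_1$ norms to equality in every coordinate (termwise $|s_i|\le |r_i|+|s_i-r_i|$ plus equality of the sums), which you state but do not prove; it is one line.
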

\begin{proof}
 Let
$ y = \big( \ell - x_1 - 2e - 1 + s , \mysep  y_1 , \mysep  y_2 \big) $
for some $ s \in \mathbb{Z} $. If $ s < 0 $ we have
$ d(v,y) \geq v_0 - y_0 = e - s > e $ which contradicts one of the
assumptions of the lemma. We next show that the assumption $ s > 0 $ also
leads to a contradiction. We can assume that $ x_0 > y_0 $; otherwise, the
vertex $ \big( \ell , \mysep 0 , \mysep 0 \big) $ would be covered by both
$ x $ and $ y $. We can also assume that $ s \leq x_1 $, for otherwise
we would have $ x_0 - y_0 \leq 2e - t $, and since the sum of the
remaining $ x_i $'s is $ t $ it would follow that
\begin{equation}
\begin{aligned}
 d(x,y) &=     \sum_{ x_i > y_i }  ( x_i - y_i )
         =     x_0 - y_0 + \sum_{ i > 0,\, x_i > y_i }  ( x_i - y_i )  \\
        &\leq  x_0 - y_0 + \sum_{ i > 0 }  x_i
         \leq  2e .
\end{aligned}
\end{equation}
Now, since $ v_0 - y_0 = e - s < e $ and $ y_2 \geq v_2 = 0 $, we must have
$ v_1 - y_1 = x_1 + e + 1 - y_1 = s $ in order to achieve $ d(v,y) = e $
(see \eqref{eq_metric2}),
and hence
\begin{equation}
\label{eq_y1x1_2D}
 y_1 = x_1 - s + e + 1 \geq e + 1 > x_1 ,
\end{equation}
where the first inequality follows from the above assumption that
$ s \leq x_1 $. Since $ y_0 < x_0 $ and $ y_1 - x_1 = e + 1 - s $,
in order to have $ d(x,y) = 2e + 1 $ we must have $ y_2 - x_2 = e + s $.
But this is impossible because
\begin{equation}
 y_2 - x_2  \leq  y_2  =  \ell - y_0 - y_1  =  e  <  e + s ,
\end{equation}
where we have used \eqref{eq_y1x1_2D}. We thus conclude that $ s $
must be zero. In that case we have $ v_0 - y_0 = e $, and since
$ d(v,y) = e $, we must also have $ y_1 \geq v_1 = x_1 + e + 1 $.
This shows that $ y $ is necessarily of the form \eqref{eq_y_2D}.
To prove the last part of the claim observe that $ y_0 < x_0 $, 
$ y_1 - x_1 = e + 1 + u $, and $ d(x,y) = 2e + 1 $ imply that
$ y_2 - x_2 = e - u $ when $ u < e $. But since
$ y_2 = e - u $, this can only hold if $ x_2 = 0 $ whenever $ y_2 > 0 $.
\end{proof}

Assume therefore that we have two codewords of the form
\eqref{eq_x_2D} and \eqref{eq_y_2D}, and observe the point
\begin{equation}
\label{eq_w_2D}
  w = \big( \ell - t - e - 1 , \mysep  x_1 + u , \mysep  \max\{x_2, y_2\} + 1 \big) ,
\end{equation}
where $ y_2 = e - u $. (Here again we assume that $ w_0 \geq 0 $
because otherwise the diameter of $ \Delta_\ell^2 $ would be
$ \ell \leq 2e $.)
To show that $ w \in \Delta_\ell^2 $, consider two cases:
$ 1.) $  $ x_2 > 0 $; by \eqref{eq_yxzero_2D} this implies that
$ y_2 = e - u = 0 $ and $ \max\{x_2,y_2\} = x_2 $, wherefrom
$ \sum_i w_i = \ell $,
$ 2.) $  $ x_2 = 0 $; in this case $ t = x_1 $ and
$ \max\{x_2,y_2\} = y_2 = e - u $, so we again have $ \sum_i w_i = \ell $.
Furthermore, we have that $ d(x,w) = d(y,w) = e + 1 $. This is shown
easily by considering the above two cases. Namely, if $ x_2 > 0 $,
then $ y_2 = e - u = 0 $ and so
$ y =  \big( \ell - x_1 - 2e - 1 , \mysep  x_1 + 2e + 1 , \mysep  0 \big) $,
$ w = \big( \ell - t - e - 1 , \mysep  x_1 + e , \mysep  x_2 + 1 \big) $,
and by \eqref{eq_metric2} the statement follows.
The case $ x_2 = 0 $ is similar.

We shall need the following claim in the sequel (we omit the
proof because the statement is geometrically quite clear).
\begin{lemma}
\label{lemma_znogap}
 Let $ x , y , w \in \Delta_\ell^2 $ be such that $ d(x, w) = d(y, w) = e + 1 $,
$ d(x, w + f_{k,l}) = d(x, w + f_{m,l}) =  d(y, w + f_{k,m}) = e $. In words,
$ w $ is outside the decoding regions of $ x $ and $ y $, but its neighbors
along three consecutive directions (see Figure \ref{fig_directions}) are not.
Then the point $ z $ such that $ w \in {\mathcal B}(z, e) $,
$ {\mathcal B}(x, e) \cap {\mathcal B}(z, e) = {\mathcal B}(y, e) \cap {\mathcal B}(z, e) = \emptyset $
lies on the direction $ f_{l,k} = -f_{k,l} $, i.e., $ z = w + e f_{l,k} $.
\end{lemma}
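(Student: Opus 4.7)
The plan is to combine two standard observations about the $\ell_1$ geometry of $\Delta_\ell^2$. First, the triangle inequality $d(a,c) \leq d(a,b) + d(b,c)$ is attained with equality precisely when for every coordinate $i$ the value $b_i$ lies between $a_i$ and $c_i$ (inclusive). Second, an elementary check shows that the condition $d(x, w + f_{k,l}) = d(x,w) - 1$---i.e., moving from $w$ by $f_{k,l}$ brings us strictly closer to $x$---forces $x_k > w_k$ and $x_l < w_l$, because the increment at position $k$ and the decrement at position $l$ each have to close an existing gap to $x$ for the $\ell_1$ distance to drop.

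The first step is to force equality in the triangle inequality. Combining $d(x,w) = e+1$, $d(w,z) \leq e$, and the disjointness assumption $d(x,z) \geq 2e+1$ yields $d(x,z) = 2e+1$ and $d(w,z) = e$ exactly, with $w_i$ between $x_i$ and $z_i$ for every $i$. The identical argument with $y$ in place of $x$ gives the analogous betweenness for $y$ and $z$, together with $d(y,z) = 2e+1$.

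The second step reads off the signs of the differences from the neighbor hypotheses. The two conditions on $x$ give $x_k > w_k$, $x_l < w_l$, $x_m > w_m$, and the condition on $y$ gives $y_k > w_k$, $y_m < w_m$. Feeding these into the betweenness conclusions yields $z_k \leq w_k$, $z_l \geq w_l$, $z_m \leq w_m$ (from $x$) and $z_k \leq w_k$, $z_m \geq w_m$ (from $y$); in particular $z_m = w_m$. The simplex identity $\sum_i z_i = \sum_i w_i = \ell$ then forces $w_k - z_k = z_l - w_l =: \alpha \geq 0$, so $d(w,z) = \alpha$. The already-established $d(w,z) = e$ pins down $\alpha = e$, which is exactly $z = w + e\, f_{l,k}$.

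The one point that takes a little care is that the hypothesis $d(y, w + f_{k,m}) = e$ constrains only coordinates $k$ and $m$ and says nothing about $y_l$. This is harmless: the $x$-side already yields $z_l \geq w_l$, and the $y$-side is needed only to promote the $x$-inequality $z_m \leq w_m$ to the equality $z_m = w_m$; after that the sum-to-$\ell$ constraint completes the argument. Geometrically, the three ``inward'' directions $f_{k,l}, f_{m,l}, f_{k,m}$ confine $w$ to a hexagonal cell whose only admissible opposite corner is $w + e\, f_{l,k}$, which is what makes the statement geometrically transparent.
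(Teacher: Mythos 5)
Your proof is correct. There is nothing in the paper to compare it against: the authors explicitly omit the proof of Lemma~\ref{lemma_znogap} as ``geometrically quite clear,'' so your write-up supplies an argument where the paper has none. Both of your ingredients check out. The sign-reading is right: $d(x,w+f_{k,l})=d(x,w)-1$ forces each of the two coordinate moves to close a gap, hence $x_k>w_k$ and $x_l<w_l$, and similarly for the other two hypotheses. The chain $2e+1\le d(x,z)\le d(x,w)+d(w,z)\le (e+1)+e$ correctly pins down $d(w,z)=e$, $d(x,z)=2e+1$, and coordinatewise betweenness of $w$ between $x$ and $z$ (and likewise for $y$); combining the resulting inequalities gives $z_m=w_m$, and the constraint $\sum_i z_i=\sum_i w_i=\ell$ together with $d(w,z)=e$ yields $z=w+e\,f_{l,k}$. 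Your method is a direct coordinatewise argument, in contrast to the technique the paper uses for the related higher-dimensional Lemma~\ref{prop_orthogonal}, which manipulates path decompositions $z=w+\sum_{i,j}\alpha_{i,j}f_{i,j}$ and identities such as $f_{1,3}=f_{1,2}+f_{2,3}$; your betweenness formulation handles the asymmetry of the hypotheses (two neighbor conditions on $x$, one on $y$) cleanly, which is exactly the delicate point you flag. Two small remarks: you implicitly use that disjointness of $e$-balls is equivalent to distance at least $2e+1$, which holds because $d$ is the graph metric on $\Delta_\ell^2$ (the paper uses this freely as well, but it deserves a sentence); and note that the paper's later invocations of this lemma permute the roles of the three indices relative to the literal statement, which does not affect your proof of the statement as written since the underlying configuration (three consecutive inward directions, with the opposite of the middle one singled out) is the same up to relabeling.
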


\begin{lemma}
\label{lemma_x_2D}
 Let $ x, y \in \Delta_\ell^2 $ be given by \eqref{eq_x_2D} and \eqref{eq_y_2D}, 
respectively. Let also either $ a.) $ $ t < e $, or $ b.) $ $ t = e $ but
$ 0 < x_1 < e $. Then $ x $ and $ y $ cannot be codewords of an $ e $-perfect
code.
\end{lemma}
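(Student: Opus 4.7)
The plan is to derive a contradiction by showing that the point $w$ from \eqref{eq_w_2D} cannot be covered by any third codeword $z$ whose decoding region is disjoint from $\mathcal{B}(x,e)$ and $\mathcal{B}(y,e)$. Since $d(x,w) = d(y,w) = e+1$, such a $z$ must satisfy $d(z,w) \leq e$ and $d(z,x), d(z,y) \geq 2e+1$. The triangle inequality $d(z,x) \leq d(z,w) + d(w,x) \leq 2e+1$ is then forced to be tight, so $d(z,w) = e$, $d(z,x) = 2e+1$, and $w$ must lie on a shortest $x$-to-$z$ path; the same applies to $y$. In the $\ell_1$-like metric \eqref{eq_metric}, this tightness is equivalent to the coordinatewise \emph{betweenness} condition that each $w_i$ lies between $x_i$ and $z_i$, and between $y_i$ and $z_i$.

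I would then split into subcases based on whether $x_2$ and $u$ vanish. When $x_2 > 0$ (which by Lemma \ref{prop_y_2D} forces $u = e$, thereby subsuming case $b.)$), or when $x_2 = 0$ and $u > 0$, the coordinatewise signs of $w - x$ and $w - y$ are $({-},{+},{+})$ and $({+},{-},{+})$, respectively. The betweenness conditions from $x$ and from $y$ therefore pin down $z_0 = w_0$ and $z_1 = w_1$; combined with $\sum_i z_i = \ell$ this yields $z_2 = w_2$, i.e., $z = w$. But then $d(z,x) = e+1 < 2e+1$, a contradiction. Equivalently, in this regime one can verify that $d(x, w + f_{0,1}) = d(y, w + f_{1,0}) = e$ and invoke Lemma \ref{prop_orthogonal_2D} after relabeling the coordinate indices used in its statement.

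The remaining subcase, case $a.)$ with $x_2 = 0$ and $u = 0$, is the main obstacle: here $w_1 = x_1$, so the betweenness condition from $x$ leaves $z_1$ free, and one readily checks that no opposite pair $(f_{i,j}, f_{j,i})$ of neighbors of $w$ has one endpoint in $\mathcal{B}(x,e)$ and the other in $\mathcal{B}(y,e)$, so Lemma \ref{prop_orthogonal_2D} does not apply directly. I would argue by direct coordinate analysis: combining the betweenness conditions gives $z_0 = w_0$, $z_1 \leq x_1$, and $z_2 \geq e+1$; eliminating $z_2$ via $\sum_i z_i = \ell$ and imposing the tight equality $d(z,x) = 2e+1$ then forces $z_1 = x_1 - e$, which is negative because $x_1 = t < e$ in case $a.)$. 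Hence no admissible $z$ exists in $\Delta_\ell^2$, completing the contradiction.
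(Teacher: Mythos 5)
Your argument is correct, and it reaches the paper's conclusions through the same skeleton: the same witness point $w$ from \eqref{eq_w_2D}, the same split into the subcases $x_2>0$; $x_2=0,\ u>0$; $x_2=0,\ u=0$; and the same terminal contradiction in the last subcase (a covering codeword would need second coordinate $x_1-e<0$). Where you genuinely diverge is in the tool used to rule out the third codeword $z$: the paper computes distances from $x$ and $y$ to specific neighbors $w+f_{i,j}$ and then invokes Lemma \ref{prop_orthogonal_2D} in the first two subcases and the unproven, ``geometrically clear'' Lemma \ref{lemma_znogap} in the third, whereas you derive everything from the forced tightness of $d(z,x)\le d(z,w)+d(w,x)$ and $d(z,y)\le d(z,w)+d(w,y)$, i.e., from coordinatewise betweenness of $w$ --- which is exactly the right characterization of geodesics for the metric \eqref{eq_metric}. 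This buys self-containedness: your third subcase replaces the appeal to Lemma \ref{lemma_znogap} by a two-line coordinate computation, and your first two subcases in effect reprove Lemma \ref{prop_orthogonal_2D} in the position where it is needed. One step you should make explicit: in the subcase $x_2>0$, the sign $w_0-y_0=e-x_2>0$ is strict only because the hypotheses force $x_2<e$ (in case a.) from $x_2\le t<e$, in case b.) from $x_2=e-x_1<e$ with $x_1>0$); this is precisely where conditions a.) and b.) enter, matching the paper's parenthetical remark, and if $x_2=e$ the betweenness constraint in coordinate $0$ degenerates and $z$ can escape along $f_{3,1}$ --- as it must, since $\big(\ell-e,\ 0,\ e\big)$ is a codeword of the perfect code ${\mathcal C}_2^{(2)}$ when $\ell=3e+1$.
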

\begin{proof}
Assume first that $ x_2 > 0 $. Then, as noted above,
$ y = \big( \ell - x_1 - 2e - 1 , \mysep  x_1 + 2e + 1 , \mysep  0 \big) $,
$ w = \big( \ell - t - e - 1 , \mysep  x_1 + e , \mysep  x_2 + 1 \big) $.
Furthermore,
$ w + f_{1,2} = \big( \ell - t - e , \mysep  x_1 + e - 1 , \mysep  x_2 + 1 \big) $
and
$ w + f_{2,1} = \big( \ell - t - e - 2 , \mysep  x_1 + e + 1 , \mysep  x_2 + 1 \big) $.
By using \eqref{eq_metric2} we easily find that $ d(x,w) = d(y,w) = e + 1 $ and
$ d(x, w + f_{1,2}) = d(y, w + f_{2,1}) =  e $ (for the last equality we need
the fact that either $ t < e $, or $ t = e $ but $ x_1 > 0 $). Hence, by Lemma
\ref{prop_orthogonal_2D}, we conclude that there exists no codeword $ z $
whose decoding region contains $ w $ and is disjoint from the decoding regions
of $ x $ and $ y $.

Assume now that $ x_2 = 0 $. If $ u > 0 $, then
$ x = \big( \ell - x_1 , \mysep  x_1 , \mysep  0 \big) $
$ y = \big( \ell - x_1 - 2e - 1 , \mysep  x_1 + e + u + 1 , \mysep  e - u \big) $,
$ w = \big( \ell - x_1 - e - 1 , \mysep  x_1 + u , \mysep  e - u + 1 \big) $,
$ w + f_{1,2} = \big( \ell - x_1 - e , \mysep  x_1 + u - 1 , \mysep  e - u + 1 \big) $,
and
$ w + f_{2,1} = \big( \ell - x_1 - e - 2 , \mysep  x_1 + u + 1 , \mysep  e - u + 1 \big) $.
We therefore again have $ d(x,w) = d(y,w) = e + 1 $ and
$ d(x, w + f_{1,2}) = d(y, w + f_{2,1}) =  e $, and by Lemma \ref{prop_orthogonal_2D}
the conclusion follows.

Finally, if $ x_2 = 0 $ and $ u = 0 $, then
$ y = \big( \ell - x_1 - 2e - 1 , \mysep  x_1 + e + 1 , \mysep  e  \big) $,
$ w = \big( \ell - x_1 - e - 1 , \mysep  x_1 , \mysep  e + 1  \big) $,
$ w + f_{1,3} = \big( \ell - x_1 - e , \mysep  x_1 , \mysep  e  \big) $,
$ w + f_{2,3} = \big( \ell - x_1 - e - 1 , \mysep  x_1 + 1 , \mysep  e  \big) $,
and
$ w + f_{2,1} = \big( \ell - x_1 - e - 2 , \mysep  x_1 + 1 , \mysep  e + 1  \big) $.
Therefore, we have $ d(x,w) = d(y,w) = e + 1 $, $ d(x, w + f_{1,3}) = e $,
and $ d(y, w + f_{2,3}) = d(y, w + f_{2,1}) = e $. By Lemma \ref{lemma_znogap}
we conclude that the codeword $ z $ covering $ w $ has to be
$ z = w + ef_{3,2} = \big( \ell - x_1 - e - 1 , \mysep  x_1 - e , \mysep  2e + 1  \big) $,
but this is impossible because we have assumed that $ x_1 < e $ and therefore
the second coordinate of $ z $ is negative.
\end{proof}
The previous lemma shows that either $ \big( \ell - e , \mysep e , \mysep 0 \big) $
or $ \big( \ell - e , \mysep 0 , \mysep e \big) $ must be a codeword if the
vertex $ \big( \ell , \mysep 0 , \mysep 0 \big) $ is to be covered, and similarly
for the other two vertices $ \big( 0 , \mysep \ell , \mysep 0 \big) $ and
$ \big( 0 , \mysep 0 , \mysep \ell \big) $. This proves that the codes given by
\eqref{eq_perfect_2D} are the only perfect codes in $ \Delta_{3e + 1}^2 $.
It is left to prove that for $ \ell \neq 3e + 1 $ perfect codes do not exist.

\begin{proposition}
\label{prop_otherl}
 There are no $ e $-perfect codes in $ \Delta_\ell^2 $ for $ \ell \neq 3e + 1 $.
\end{proposition}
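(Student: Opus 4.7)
The plan is to force the corner-codeword structure via Lemma~\ref{lemma_x_2D} and then rule out every possible extension for $\ell\neq 3e+1$ by a triangle-inequality test. Applying Lemma~\ref{lemma_x_2D} together with its two cyclic-permutation variants at the three vertices of $\Delta_\ell^2$, the codeword covering each corner vertex is forced to be one of only two ``sharp'' options; for example, the codeword at $(\ell,0,0)$ is either $(\ell-e,e,0)$ or $(\ell-e,0,e)$. Since any single ball of radius $e$ contains at most one corner (the three corners being pairwise at distance $\ell>2e$), the three corner codewords $x^{(1)},x^{(2)},x^{(3)}$ are necessarily distinct, giving at most $2^3=8$ admissible configurations.

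Next I would tabulate pairwise distances: any two such corner codewords at distinct corners lie at distance either $\ell-e$ (``rotationally consistent'' pairs) or $\ell-2e$ (``reflected'' pairs sharing an edge). The perfect-code constraint $\geq 2e+1$ rules out $\ell\leq 3e$ outright (neither distance suffices there), forces the rotationally symmetric triple (as in $\mathcal{C}_2^{(1)}$ and $\mathcal{C}_2^{(2)}$ of~\eqref{eq_perfect_2D}) whenever $3e+2\leq\ell\leq 4e$, and permits the mixed (reflected) triples only for $\ell\geq 4e+1$.

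For $\ell>3e+1$ the strategy is to exhibit a point $p$ not covered by any corner ball and show, via triangle inequality, that any codeword $z$ within distance $e$ of $p$ is forced too close to some $x^{(i)}$. Concretely, for the rotationally symmetric triple I would take $p=(\ell-2e-1,\,2e+1,\,0)$: one checks that $p$ is uncovered, and $d(p,x^{(1)})=e+1$ combined with $d(z,p)\leq e$ forces $d(z,x^{(1)})=2e+1$. Equality in the triangle inequality in the $\ell_1$ metric pins $p$ onto a geodesic from $z$ to $x^{(1)}$, which after a direct calculation gives $z_0=\ell-3e-1$ and $d(z,x^{(2)})=\ell-3e-1$. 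This violates $d(z,x^{(2)})\geq 2e+1$ whenever $\ell\leq 5e+1$. For the remaining range $\ell\geq 5e+2$ (and for the reflected-triple configurations allowed once $\ell\geq 4e+1$), an analogous argument with a different test point --- chosen so that equality in the triangle inequality holds simultaneously against two of the corner codewords --- produces the same contradiction.

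The main obstacle is making the test-point argument uniform in $\ell$ and in the corner configuration: for $3e+2\leq\ell\leq 5e+1$ with the rotationally symmetric triple the point above works cleanly, but for larger $\ell$, or for reflected corner triples (where an entire middle segment of some edge is uncovered and must itself be tiled by a $1$D perfect code as in Proposition~\ref{thm_allperfbinary}), the correct $p$ depends on the configuration. A cleaner route would be to iterate Lemma~\ref{lemma_x_2D}, after a coordinate re-framing, at the boundary of the region already covered by the corner balls, effectively peeling off one ring of balls at a time and reducing to a smaller simplex.
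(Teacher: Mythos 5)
Your opening step (forcing the corner codeword at $(\ell,0,0)$ to be $(\ell-e,e,0)$ or $(\ell-e,0,e)$ via Lemma~\ref{lemma_x_2D}) matches the paper, and your pairwise-distance count among the three corner codewords does dispose of $\ell\le 3e$ cleanly. But the heart of the proposition is $\ell>3e+1$, and there your argument has a genuine gap that you yourself flag: the test point $p=(\ell-2e-1,2e+1,0)$ only yields a contradiction for $\ell\le 5e+1$ (and only for the rotationally symmetric triple), and you do not exhibit the ``different test point'' that is supposed to handle $\ell\ge 5e+2$ or the reflected configurations admissible once $\ell\ge 4e+1$. This is not a routine omission. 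No single test point compared against the \emph{corner} codewords can work uniformly in $\ell$: for large $\ell$ the three corner balls are far apart, the uncovered region contains many additional codewords about which you know nothing, and a triangle-inequality bound against far-away corner codewords gives no contradiction. Also, your intermediate claim $d(z,x^{(2)})=\ell-3e-1$ is not forced --- the geodesic condition only pins $z_0=\ell-3e-1$ and leaves $z_1\in[2e+1,3e+1]$ free, so $d(z,x^{(2)})$ can also equal $z_1-2e-1\le e$; either value is $<2e+1$ when $\ell\le 5e+1$, so the conclusion survives there, but the computation as stated is incomplete.

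The paper avoids the uniformity problem entirely by staying local to one corner and \emph{chaining} forced codewords: starting from $x=(\ell-e,0,e)$, Lemma~\ref{prop_y_2D} forces $y=(\ell-2e-1,2e+1,0)$; then the uncovered point $w=(\ell-2e-1,e,e+1)$ is bounded in three consecutive directions, so Lemma~\ref{lemma_znogap} forces $z=(\ell-3e-1,e,2e+1)$ exactly; a further uncovered point forces $q=(\ell-4e-2,3e+1,e+1)$; and finally the point $p=(\ell-3e-2,3e+2,0)$ can only be covered by a codeword whose third coordinate is $-e$, which is impossible. The contradiction is a boundary violation (a negative coordinate), independent of $\ell$, which is exactly what makes the argument uniform. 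Your concluding suggestion of ``peeling off one ring of balls at a time'' is essentially this chaining idea, but it is left as a sketch; to complete your proof you would need a forcing lemma of the strength of Lemma~\ref{lemma_znogap} to pin each successive codeword exactly, at which point you would be reproducing the paper's argument. As written, the proposal does not prove the statement for $\ell\ge 5e+2$, nor for the mixed corner configurations when $\ell\ge 4e+1$.
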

\begin{proof}
 The proof is illustrated in Figure \ref{fig_2D_general}, but we also give
here a more formal version. By the above arguments, we can assume that
$ x = \big( \ell - e , \mysep 0 , \mysep e \big) $ is a codeword. Observe
the point $ v = \big( \ell - e - 1 , \mysep e + 1 , \mysep 0 \big) $. By
Lemma \ref{prop_y_2D} we conclude that for $ v $ to be covered we must
take $ y = \big( \ell - 2e - 1 , \mysep 2e + 1 , \mysep 0 \big) $ to be
a codeword. Hence, we must have $ \ell \geq 2e + 1 $ for the perfect code
to exist. Now observe
$ w = \big( \ell - 2e - 1 , \mysep e , \mysep e + 1 \big) $.
We have $ d(x, w) = d(y, w) = e + 1 $ and so there must exist a third
codeword $ z $ covering $ w $. Note also that
$ d(x, w + f_{1,2}) = d(x, w + f_{1,3}) =  d(y, w + f_{2,3}) = e $ and so
by Lemma \ref{lemma_znogap} we conclude that $ z $ has to be of the form
$ w + e f_{3,1} $, i.e.,
$ z = \big( \ell - 3e - 1 , \mysep e , \mysep 2e + 1 \big) $.
Therefore, we must have $ \ell \geq 3e + 1 $ for the perfect code to exist.
The case $ \ell = 3e + 1 $ has been settled, so assume that $ \ell > 3e + 1 $.
Next, observe the point
$ u = \big( \ell - 3e - 2 , \mysep 2e + 1 , \mysep e + 1 \big) $.
We have $ d(z, u) = d(y, u) = e + 1 $ and $ d(x, u) = 2e + 2 $.
Therefore, to cover $ u $ we need a fourth codeword $ q $.
Since $ d(z, u + f_{1,2}) = d(z, u + f_{3,2}) = d(y, u + f_{1,3}) = e $,
by Lemma \ref{lemma_znogap} we conclude that
$ q = \big( \ell - 4e - 2 , \mysep 3e + 1 , \mysep e + 1 \big) $ (and so we
must have $ \ell > 4e + 1 $). Finally,
observe the point $ p = \big( \ell - 3e - 2 , \mysep 3e + 2 , \mysep 0 \big) $.
Its distance from the codewords $ x, y, z, q $ is easily seen to be $ > e $,
and therefore we need another codeword to cover it. However, since
$ d(q, p) = d(y, p) = e + 1 $ and
$ d(q, p + f_{3,1}) = d(q, p + f_{3,2}) =  d(y, p + f_{1,2}) = e $,
this codeword would (by Lemma \ref{lemma_znogap}) have to be
$ p + e f_{2,3} = \big( \ell - 3e - 2 , \mysep 4e + 2 , \mysep -e \big) $
which is impossible.
\end{proof}
\begin{figure}[h]
 \centering
  \includegraphics[width=0.65\columnwidth]{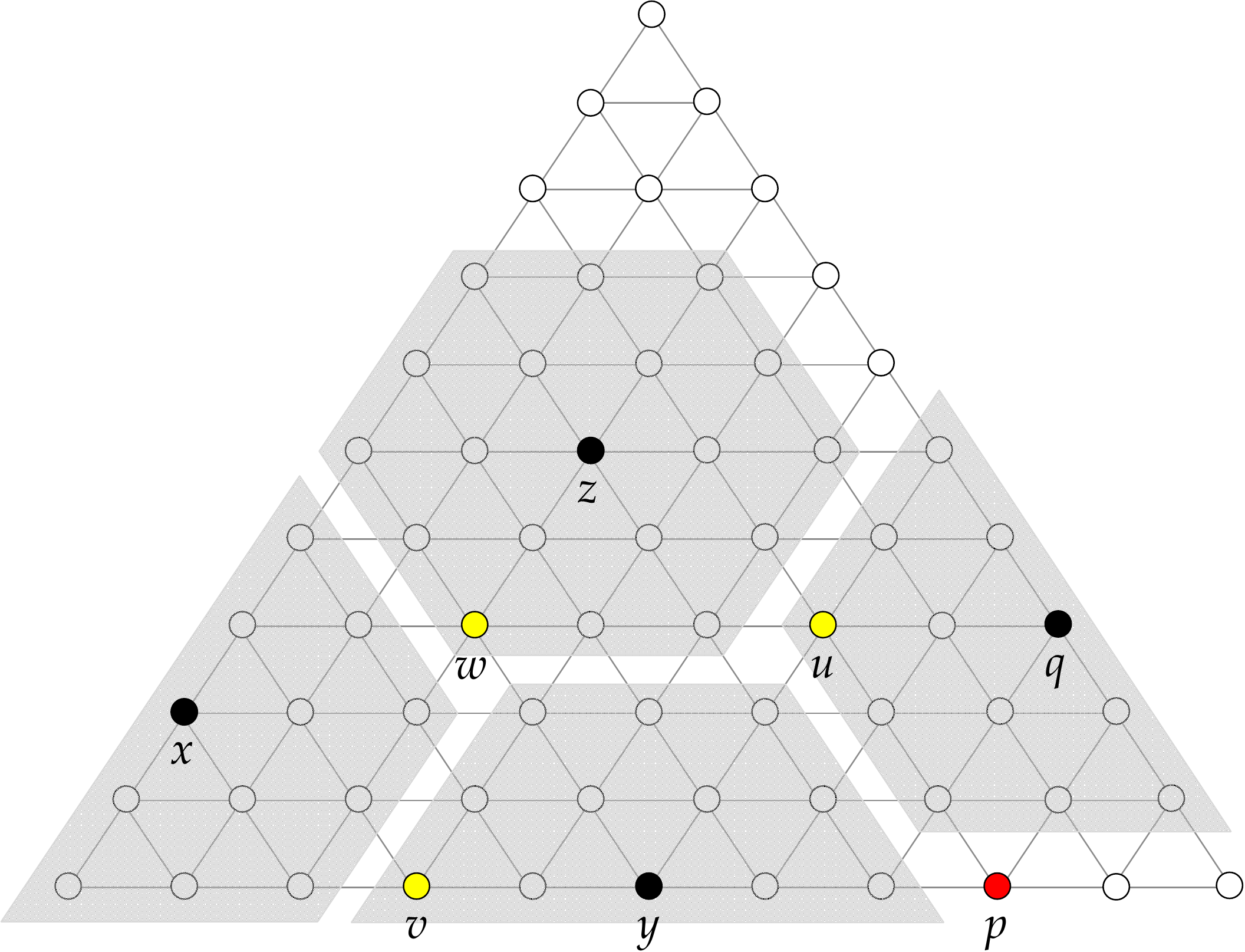}
  \caption{Proof of Proposition \ref{prop_otherl}.}
\label{fig_2D_general}
\end{figure}%
\subsection{Larger alphabets}
\label{proof_larger}

We now turn to the higher-dimensional case.

As in two dimensions, given some $ x \in \Delta_\ell^n $,
we can always express the point $ y \in \Delta_\ell^n $ by specifying a
path from $ x $ to $ y $. This is formalized by using vectors $ f_{i,j} $,
as before (the $ n $-dimensional vector $ f_{i,j} $ has a $ 1 $ at the
$ i $'th position, a $ -1 $ at the $ j $'th position, and zeros elsewhere, e.g.,
$ f_{1,2} = \big( 1, \mysep -1, \mysep 0, \mysep \ldots, \mysep 0 \big) $).
Namely, for any $ y \in \Delta_\ell^n $ we can write
\begin{equation}
 y = x + \sum_{i,j} \alpha_{i,j} f_{i,j}, 
\end{equation}
for some integers $ \alpha_{i,j} \geq 0 $. If $ d(x,y) = \delta $, then there
exists such a representation of $ y $ with $ \sum_{i,j} \alpha_{i,j} = \delta $.
We call two directions $ f_{i,j} $ and $ f_{k,l} $ orthogonal if
$ \{ i, j \} \cap \{ k, l \} = \emptyset $, i.e., if there is no coordinate
at which both of them are nonzero.

The following claim is a generalization of Lemma \ref{prop_orthogonal_2D}
to higher dimensions. Suppose we have two codewords ($ x, y $) and a point
$ w $ lying outside their decoding regions. The lemma asserts that if
$ w $ is bounded by $ { \mathcal B }(x, e) $and $ { \mathcal B }(y, e) $ in
some direction, say $ f_{1,2} $, then the codeword $ z $ covering $ w $ has
to lie in the subspace orthogonal to $ f_{1,2} $, i.e., it must be of the
form
\begin{equation}
 z = w + \big( 0, \mysep  0 , \mysep s_2, \mysep \ldots , \mysep s_n \big)
\end{equation}
where $ \sum_{i} s_i = 0 $ and  $ \sum_{i} |s_i| = 2e $. 
\begin{lemma}
\label{prop_orthogonal}
 Let $ x, y, w \in \Delta_\ell^n $ be such that $ d(x, w) = d(y, w) = e + 1 $,
$ d( x, w + f_{1,2} ) = e $, and $ d( y, w + f_{2,1} ) = e $. Then the point
$ z $ such that $ w \in { \mathcal B }(z,e) $,
$ { \mathcal B }(x,e) \cap { \mathcal B }(z,e) = \emptyset $ and
$ { \mathcal B }(y,e) \cap { \mathcal B }(z,e) = \emptyset $ must have a
representation of the form:
\begin{equation}
 z = w + \sum_{i,j \notin \{1, 2\}} \alpha_{i,j} f_{i,j}  ,
\end{equation}
with $ \alpha_{i,j} \geq 0 $, $ \sum_{i,j \notin \{1,2\}} \alpha_{i,j} = e $.
\end{lemma}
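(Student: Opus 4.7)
The plan is to use the distance hypotheses together with the triangle inequality to pin down $d(w, z)$, $d(w + f_{1,2}, z)$, and $d(w + f_{2,1}, z)$ exactly, and then translate those three identities into sign constraints on the coordinates of $z - w$ in the two positions where $f_{1,2}$ is supported.

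First I would show $d(w, z) = e$, $d(x, z) = d(y, z) = 2e + 1$, and $d(w + f_{1,2}, z) = d(w + f_{2,1}, z) = e + 1$. The disjointness of $\mathcal{B}(x, e)$ and $\mathcal{B}(z, e)$ forces $d(x, z) \geq 2e + 1$, while the triangle inequality gives $d(x, z) \leq d(x, w) + d(w, z) \leq (e + 1) + e = 2e + 1$. Equality throughout pins $d(w, z) = e$ and $d(x, z) = 2e + 1$; the parallel argument with $y$ gives $d(y, z) = 2e + 1$. Combining $d(x, z) = 2e + 1$ with $d(x, w + f_{1,2}) = e$ via triangle inequality gives $d(w + f_{1,2}, z) \geq e + 1$, while $d(w + f_{1,2}, z) \leq d(w + f_{1,2}, w) + d(w, z) = 1 + e$, so equality holds. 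Symmetrically $d(w + f_{2,1}, z) = e + 1$.

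Next I would translate these identities into coordinate-level constraints. Let $r_1, r_2$ denote the entries of $z - w$ in the two positions where $f_{1,2}$ is nonzero (so $f_{1,2}$ adds $+1$ to position $1$ and $-1$ to position $2$), and let $R$ denote the sum of absolute values of the remaining entries. Then $d(w, z) = e$ reads $|r_1| + |r_2| + R = 2e$, while $d(w + f_{1,2}, z) = e + 1$ reads $|r_1 - 1| + |r_2 + 1| + R = 2e + 2$. Subtracting,
\[
\bigl(|r_1 - 1| - |r_1|\bigr) + \bigl(|r_2 + 1| - |r_2|\bigr) = 2.
\]
Each difference on the left is an integer in $\{-1, +1\}$, so both must equal $+1$, forcing $r_1 \leq 0$ and $r_2 \geq 0$. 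The analogous subtraction using $d(w + f_{2,1}, z) = e + 1$ gives $r_1 \geq 0$ and $r_2 \leq 0$. Hence $r_1 = r_2 = 0$, i.e.\ $z - w$ vanishes in the two coordinates of $f_{1,2}$. Since the remaining entries sum to zero (as both $z, w \in \Delta_\ell^n$) and have total absolute value $2e$, they decompose as a nonnegative integer combination $\sum_{i,j \notin \{1,2\}} \alpha_{i,j} f_{i,j}$ with $\sum \alpha_{i,j} = e$, as required.

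The only subtlety is making sure $w + f_{1,2}, w + f_{2,1} \in \Delta_\ell^n$, but this is implicit in the hypotheses (which quote distances from those points). Beyond that, the argument is essentially careful $\ell_1$-bookkeeping and presents no deeper combinatorial obstacle; the only ``clever'' move is extracting two independent sign constraints from the pair of identities at $w \pm f_{1,2}$, which together pin the two $f_{1,2}$-coordinates to zero.
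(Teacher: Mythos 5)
Your proof is correct. It reaches the same two milestones as the paper's own argument --- first that $ d(w,z) = e $ and hence $ d(x,z) = d(y,z) = 2e+1 $, and then that the coordinates of $ z - w $ in positions $1$ and $2$ vanish --- but the second step is executed by a genuinely different mechanism. The paper fixes a geodesic decomposition $ z = w + \sum_{i,j} \alpha_{i,j} f_{i,j} $ with $ \sum_{i,j} \alpha_{i,j} = e $ and shows that any term $ \alpha_{i,j} > 0 $ with $ i $ or $ j $ in $ \{1,2\} $ can be rewritten (via identities such as $ f_{1,3} = f_{1,2} + f_{2,3} $) to exhibit $ d(z, w + f_{1,2}) \leq e $ or $ d(z, w + f_{2,1}) \leq e $, which places $ w + f_{1,2} $ in $ \mathcal{B}(x,e) \cap \mathcal{B}(z,e) $ (resp.\ $ w + f_{2,1} $ in $ \mathcal{B}(y,e) \cap \mathcal{B}(z,e) $), a contradiction. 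You instead pin down $ d(z, w+f_{1,2}) = d(z, w+f_{2,1}) = e+1 $ exactly by squeezing between the two triangle inequalities, and then extract sign constraints on $ r_1, r_2 $ from differences of $\ell_1$ sums (using that $ |a-1| - |a| = \pm 1 $ for integer $a$); the constraints from $ w + f_{1,2} $ and $ w + f_{2,1} $ together force $ r_1 = r_2 = 0 $. Both arguments ultimately rest on the same geometric fact --- that the disjointness hypotheses exclude $ w + f_{1,2} $ and $ w + f_{2,1} $ from $ \mathcal{B}(z,e) $ --- but your coordinate-level bookkeeping avoids the paper's case analysis over which $ \alpha_{i,j} $ is positive and sidesteps any ambiguity about which geodesic decomposition of $ z - w $ is being used. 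The final repackaging of the remaining coordinates into $ \sum_{i,j \notin \{1,2\}} \alpha_{i,j} f_{i,j} $ with $ \sum \alpha_{i,j} = e $ is routine and you state it correctly.
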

\begin{proof}
 The point $ z $ has to be at distance $ e $ from $ w $. (If the distance
were larger, the ball $ { \mathcal B }(z,e) $ would not contain $ w $, and
if it were smaller this ball would intersect $ { \mathcal B }(x,e) $ and
$ { \mathcal B }(y,e) $.)
We can therefore write
\begin{equation}
 z = w + \sum_{i,j} \alpha_{i,j} f_{i,j} 
\end{equation}
where $ \alpha_{i,j} \geq 0 $, $ \sum_{i,j} \alpha_{i,j} = e $. We need
to show that in such a representation we necessarily have $ \alpha_{i,j} = 0 $
whenever $ i \in \{1, 2\} $ or $ j \in \{1, 2\} $. Suppose that this is
not true, and that $ \alpha_{1,3} > 0 $ for example (the proof is similar
if any other $ \alpha_{i,j} $ with $ i \in \{1, 2\} $ or $ j \in \{1, 2\} $
is assumed positive). Since $ f_{1,3} = f_{1,2} + f_{2,3} $, we can write
\begin{equation}
 \begin{aligned}
  z &= w + f_{1,2} + f_{2,3} + (\alpha_{1,3} - 1) f_{1,3}
         + \sum_{(i,j)\neq(1,3)} \alpha_{i,j} f_{i,j}   \\
    &= w + f_{1,2} + \sum_{i,j} \beta_{i,j} f_{i,j} ,
 \end{aligned}
\end{equation}
where $ \beta_{i,j} \geq 0 $, $ \sum_{i,j} \beta_{i,j} = e $, which
implies that $ d(z, w + f_{1,2}) = e $. But we have assumed that also
$ d(x, w + f_{1,2}) = e $, which means that
$ { \mathcal B }(x,e) \cap { \mathcal B }(z,e) \neq \emptyset $,
a contradiction.
\end{proof}
\begin{remark}
\label{rem_orthogonal_2D}
 Since there are no orthogonal directions in the two-dimensional simplex
$ \Delta_\ell^2 $, the above lemma implies that if $ w $ is
``trapped'' between $ { \mathcal B }(x,e) $ and $ { \mathcal B }(y,e) $,
then there exists no $ z $ with $ w \in { \mathcal B }(z,e) $ and 
$ { \mathcal B }(z,e) \cap { \mathcal B }(x,e) = 
{ \mathcal B }(z,e) \cap { \mathcal B }(y,e) = \emptyset $.
This is precisely the statement of Lemma \ref{prop_orthogonal_2D}.
\end{remark}
\par Let us now continue with the proof of nonexistence of perfect codes.
As in the two-dimensional case, we start by observing the vertex
$ \big( \ell , \mysep  0 , \mysep  \ldots , \mysep  0 \big) $.
For this vertex to be covered there must exist a codeword of the form
\begin{equation}
\label{eq_xgeneral}
 x =  \big( \ell - t , \mysep  x_1 , \mysep  \ldots , \mysep  x_n \big) 
\end{equation}
with $ x_1 + \ldots + x_n = t \leq e $.
Without loss of generality, we assume that $ x_1 > 0 $ whenever $ t > 0 $.
Observe now the point
\begin{equation}
 v =  \big( \ell - x_1 - e - 1 , \mysep  x_1 + e + 1 ,
            \mysep  0 , \mysep  \ldots , \mysep  0
      \big)
\end{equation}
We have $ d(x,v) = e + 1 $ and so the point $ v $ is not covered by
$ { \mathcal B }(x,e) $. To cover it we need another codeword $ y $
with $ d(v,y) = e $ and $ d(x,y) = 2e + 1 $.
\begin{lemma}
\label{prop_ygeneral}
 The point $ y $ satisfying $ d(v,y) = e $, $ d(x,y) = 2e+1 $ is of the
form
\begin{equation}
\label{eq_ygeneral}
 y =  \big( \ell - x_1 - 2e - 1 , \mysep  x_1 + e + 1 + u ,
            \mysep  y_2 , \mysep  \ldots , \mysep  y_n
      \big)
\end{equation}
with $ 0 \leq u \leq e $, $ y_2 + \cdots + y_n = e - u $, and with the
property that
\begin{equation}
\label{eq_yxzero}
 x_i > 0  \Rightarrow  y_i = 0 \quad \text{for} \quad i = 2, \ldots, n . 
\end{equation}
\end{lemma}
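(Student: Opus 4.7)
The plan is to mirror the proof of Lemma \ref{prop_y_2D} almost verbatim, with $ x_2 $ in the 2D case replaced by $ x_2 + \cdots + x_n $ where appropriate. Write $ y = \big( \ell - x_1 - 2e - 1 + s, \mysep y_1, \mysep \ldots, \mysep y_n \big) $ for some $ s \in \mathbb{Z} $, and aim to show $ s = 0 $, after which the remaining coordinates fall out of the two distance equations.

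First I would rule out $ s < 0 $: in this case $ d(v, y) \geq v_0 - y_0 = e - s > e $, contradicting $ d(v,y) = e $. Next I would rule out $ s > 0 $. As in the 2D argument, if $ s > x_1 $ then $ x_0 - y_0 \leq 2e - t $, so by \eqref{eq_metric2}
\begin{equation}
 d(x,y) \;\leq\; (x_0 - y_0) + \sum_{i \geq 1} x_i \;\leq\; 2e - t + t \;=\; 2e,
\end{equation}
contradicting $ d(x,y) = 2e + 1 $. So assume $ 0 < s \leq x_1 $. Since $ v_i = 0 \leq y_i $ for $ i \geq 2 $ and $ v_0 - y_0 = e - s < e $, the identity $ d(v,y) = e $ forces $ v_1 - y_1 = s $, i.e., $ y_1 = x_1 + e + 1 - s $. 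Then $ y_0 + y_1 = \ell - e $, so $ \sum_{i \geq 2} y_i = e $. But then
\begin{equation}
 d(x,y) \;=\; (y_1 - x_1) + \sum_{i \geq 2} (y_i - x_i)_+ \;\leq\; (e + 1 - s) + \sum_{i \geq 2} y_i \;=\; 2e + 1 - s,
\end{equation}
which is strictly less than $ 2e + 1 $, a contradiction. This forces $ s = 0 $.

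With $ s = 0 $ we have $ v_0 - y_0 = e $, so $ d(v,y) = e $ is saturated on coordinate $ 0 $ and therefore $ y_1 \geq v_1 = x_1 + e + 1 $; writing $ y_1 = x_1 + e + 1 + u $ yields $ u \geq 0 $, and $ \sum_{i \geq 2} y_i = e - u $ then forces $ u \leq e $. So $ y $ has the form \eqref{eq_ygeneral}. The last step is establishing \eqref{eq_yxzero}: the computation above still gives
\begin{equation}
 2e + 1 \;=\; d(x,y) \;=\; (e + 1 + u) + \sum_{i \geq 2} (y_i - x_i)_+ ,
\end{equation}
so $ \sum_{i \geq 2} (y_i - x_i)_+ = e - u = \sum_{i \geq 2} y_i $. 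Since $ (y_i - x_i)_+ \leq y_i $ termwise, equality forces, for each $ i \geq 2 $, either $ y_i = 0 $ or else $ y_i > x_i $ with $ x_i = 0 $; in both cases $ x_i y_i = 0 $, which is precisely \eqref{eq_yxzero}.

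No step presents a real obstacle; the only thing to be careful about is the bookkeeping in the case $ 0 < s \leq x_1 $, where one must use $ \sum_{i \geq 2} y_i = e $ (derived from the already-forced value of $ y_1 $) as the bound on the contribution of the remaining coordinates to $ d(x,y) $. This is the single ingredient that uses the structure of the simplex in an essential way, and it is the higher-dimensional analogue of the identity $ y_2 = e $ that appears at the corresponding point in the proof of Lemma \ref{prop_y_2D}.
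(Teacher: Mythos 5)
Your proof is correct and follows essentially the same route as the paper's: rule out $s<0$ and $s>x_1$ by the same distance bounds, eliminate $0<s\le x_1$ by forcing $v_1-y_1=s$ and bounding the contribution of the coordinates $i\ge 2$ by $\sum_{i\ge2}y_i=e$, and finally read off \eqref{eq_yxzero} from saturation of the termwise inequality $(y_i-x_i)_+\le y_i$. The only cosmetic difference is that you package the middle contradiction as $d(x,y)\le 2e+1-s$ rather than, as the paper does, requiring the tail to equal $e+s$ while showing it is at most $e$.
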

\begin{proof}
 Let 
$ y = \big( \ell - x_1 - 2e - 1 + s , \mysep  y_1 , \mysep  \ldots , \mysep  y_n \big) $
for some $ s \in \mathbb{Z} $. If $ s < 0 $ we have
$ d(v,y) \geq v_0 - y_0 = e - s > e $ which contradicts one of the
assumptions of the lemma. Let us show that the case $ s > 0 $ is
also impossible. We can assume that $ x_0 > y_0 $; otherwise, the
vertex $ \big( \ell , \mysep 0 , \mysep  \ldots , \mysep 0 \big) $
would be covered by both $ x $ and $ y $. We can also assume that
$ s \leq x_1 $, for otherwise we would have $ x_0 - y_0 \leq 2e - t $,
and since the sum of the remaining $ x_i $'s is $ t $ it would follow
that
\begin{equation}
\begin{aligned}
 d(x,y) &=     \sum_{ x_i > y_i }  ( x_i - y_i )
         =     x_0 - y_0 + \sum_{ i > 0,\, x_i > y_i }  ( x_i - y_i )  \\
        &\leq  x_0 - y_0 + \sum_{ i > 0 }  x_i
         \leq   2e .
\end{aligned}
\end{equation}
Since $ v_0 - y_0 = e - s < e $ and $ y_i \geq v_i = 0 $ for $ i \geq 2 $,
we must have $ v_1 - y_1 = x_1 + e + 1 - y_1 = s $ in order to achieve
$ d(v,y) = e $, and hence
\begin{equation}
\label{eq_y1x1}
 y_1 = x_1 - s + e + 1 \geq e + 1 > x_1 , 
\end{equation}
where the first inequality follows from the above assumption that
$ s \leq x_1 $. Since $ y_0 < x_0 $ and $ y_1 - x_1 = e + 1 - s $,
in order to have $ d(x,y) = 2e + 1 $ some of the remaining $ y_i $'s,
$ i \geq 2 $, have to be greater than the corresponding $ x_i $'s
for exactly $ \sum_{ i \geq 2,\, y_i > x_i } ( y_i - x_i ) = e + s $.
But this is impossible because
\begin{equation}
 \sum_{ i \geq 2,\, y_i > x_i } ( y_i - x_i ) \leq 
 \sum_{ i \geq 2 } y_i  =  \ell - y_0 - y_1  =  e  
 <  e + s , 
\end{equation}
where we have used \eqref{eq_y1x1}. We thus conclude that $ s $
must be zero. In that case we have $ v_0 - y_0 = e $, and since
$ d(v,y) = e $, we must also have $ y_1 \geq v_1 = x_1 + e + 1 $.
This shows that $ y $ is necessarily of the form \eqref{eq_ygeneral}.
To prove the last part of the claim observe that $ y_0 < x_0 $, 
$ y_1 - x_1 = e + 1 + u $, and $ d(x,y) = 2e + 1 $ imply that
$ \sum_{ i \geq 2,\, y_i > x_i } ( y_i - x_i ) = e - u $.
But since $ \sum_{ i \geq 2 } y_i = e - u $, this can
only hold if $ x_i = 0 $ whenever $ y_i > 0 $, $ i \geq 2 $.
\end{proof}
\par Assume therefore that we have two codewords of the form
\eqref{eq_xgeneral} and \eqref{eq_ygeneral}, and observe the point
\begin{equation}
 \begin{aligned}
  w = \big( &\ell - t - e - 1 , \mysep  x_1 + u , \\
             & \max\{x_2, y_2\} + 1 , \mysep  \max\{x_3, y_3\}, 
               \mysep  \ldots, \mysep  \max\{x_n, y_n\}
      \big) . 
 \end{aligned}
\end{equation}
By using \eqref{eq_yxzero} it is not hard to conclude that
$ w \in \Delta_\ell^n $ and that $ d(x,w) = d(y,w) = e + 1 $, and hence
we need a third codeword $ z $ to cover $ w $. Such a codeword, however,
cannot exist, as shown below.

Assume first that $ u > 0 $.
Then we have that $ d(x, w + f_{1,2}) = e $ and $ d(y, w + f_{2,1})  = e $.
By using Lemma \ref{prop_orthogonal} we then conclude that the
codeword $ z $ which covers $ w $ must be of the form
$ z = w + \big( 0, \mysep 0, \mysep s_2, \mysep \ldots, \mysep s_n \big) $
with $ \sum_{i} s_i = 0 $ and $ \sum_{i} |s_i| = 2e $ (the second condition
is needed in order to have $ d(z, w) = e $). Therefore
\begin{equation}
\begin{aligned}
 z = \big(  &\ell - t - e - 1 , \mysep  x_1 + u ,  
            \mysep \max\{x_2, y_2\} + 1 + s_2 , \\ & \max\{x_3, y_3\} + s_3 , 
             \mysep   \ldots , \mysep   \max\{x_n, y_n\} + s_n
     \big)
\end{aligned}
\end{equation}
Now, since $ x_0 - z_0 = e + 1 $ and $ x_1 < z_1 $ we must have
\begin{equation}
\label{eq_zfromx}
 \sum_{ i \geq 2,\, x_i > z_i } ( x_i - z_i )  =  e 
\end{equation}
in order for $ d(x,z) = 2e + 1 $ to hold. Similarly, from $ z_0 > y_0 $
and $ y_1 - z_1 = e + 1 $ we conclude that
\begin{equation}
\label{eq_zfromy}
 \sum_{ i \geq 2,\, y_i > z_i } ( y_i - z_i )  =  e . 
\end{equation}
But it is not hard to conclude that we cannot simultaneously have
\eqref{eq_zfromx} and \eqref{eq_zfromy} because $ x_i $'s and $ y_i $'s,
$ i \geq 2 $, are never simultaneously positive \eqref{eq_yxzero}. Namely,
since $ \sum_{ s_i < 0 } |s_i| = e $, even if we achieve $ d(y,z) = 2e + 1 $
(by letting $ s_i $'s to be negative on the coordinates where $ y_i $'s
are positive), we would have $ d(x,z) = e + 1 $ because there are no more
negative $ s_i $'s to obtain \eqref{eq_zfromx}. We thus conclude that it
is not possible to find a codeword $ z $ which covers $ w $, and whose
decoding region is disjoint from those of the codewords $ x $ and $ y $.

It is left to consider the case when $ u = 0 $. In that case
\begin{equation}
 y = \big( \ell - x_1 - 2e - 1 , \mysep  x_1 + e + 1 , 
           \mysep  y_2 , \mysep  \ldots , \mysep  y_n \big) . 
\end{equation}
Note that now $ y_2 + \cdots + y_n = e $ and hence we can assume that
$ y_2 > 0 $. Observe the point
\begin{equation}
\begin{aligned}
 w' = \big( &\ell - t - e - 1 , \mysep  x_1 + 1 , \mysep  y_2 - 1 , \\
            &\max\{x_3, y_3\} + 1 , \mysep  \max\{x_4, y_4\} , 
              \mysep  \ldots , \mysep  \max\{x_n, y_n\} 
      \big) . 
\end{aligned}
\end{equation}
We again have $ d(x,w') = d(y,w') = e + 1 $, and $ d(x, w' + f_{1,2}) = 
d(y, w' + f_{2,1}) = e $. Therefore, the codeword $ z' $ covering $ w' $
has to be of the form
$ z' = w' + \big( 0, \mysep 0, \mysep r_2, \mysep \ldots, \mysep r_n \big) $
with $ \sum_{i} r_i = 0 $ and $ \sum_{i} |r_i| = 2e $. By the same reasoning
as above we conclude that we cannot simultaneously achieve that
$ d(x,z') = 2e + 1 $ and $ d(y,z') = 2e + 1 $, and hence the codeword $ z $
whose decoding region contains $ w' $ and is disjoint from the decoding
regions of $ x $ and $ y $ does not exist.

The proof of the claim is now complete -- nontrivial perfect codes in
$ \Delta_\ell^n $, $ n > 2 $, do not exist.
\section*{Acknowledgments}

The authors are very grateful to the reviewers for a detailed reading
and many useful comments on the original version of the manuscript.
This work was supported by the Ministry of Science and Technological
Development of the Republic of Serbia (grants TR32040 and III44003).
Part of the work was done while M. Kova\v cevi\'c was visiting Aalborg
University, Denmark, under the support of the COST action IC1104.
He is very grateful to the Department of Electronic Systems, and in
particular to \v Cedomir Stefanovi\'c and Petar Popovski, for their
hospitality.
\end{document}